\documentclass[12pt]{article}
 \usepackage{amsmath, amsthm, amssymb, bbm, setspace,bigints}
 \usepackage[margin=1 in]{geometry}
\usepackage{caption, enumitem}
\usepackage{subcaption}
\usepackage[toc,page]{appendix}
\usepackage{graphicx, amsfonts, tikz, multirow}
\usepackage{natbib}
\usetikzlibrary{bayesnet}
\usepackage{pdfpages}
\usepackage{epstopdf}
\usepackage{booktabs}
\usepackage[ruled]{algorithm2e}
\usepackage[colorlinks,citecolor=blue,urlcolor=blue]{hyperref}
\usepackage[utf8]{inputenc}
\usepackage{authblk}
\usepackage{float}

\def\T{\mathrm{\scriptscriptstyle{T}}}

\def\|{\mid}

\newtheorem{theorem}{Theorem}[section]

\newtheorem{proposition}[theorem]{Proposition}

\def \T {\text{T}}
\graphicspath{{plots/}}

%
\pagestyle{plain}
\date{}
\title{Inferring synergistic and antagonistic interactions in mixtures of exposures}

\author[1]{Shounak Chattopadhyay\thanks{shounak.chattopadhyay@ucla.edu}}
\author[2]{Stephanie M. Engel\thanks{stephanie.engel@unc.edu}}
\author[3]{David Dunson \thanks{dunson@duke.edu}}
\affil[1]{Department of Biostatistics, University of California, Los Angeles}
\affil[2]{Department of Epidemiology, Gillings School of Global Public Health, University of North Carolina at Chapel Hill}
\affil[3]{Departments of Statistical Science and Mathematics, Duke University}

\begin{document}
\maketitle
\begin{abstract}
There is abundant interest in assessing the joint effects of multiple exposures on human health.  This is often referred to as the mixtures problem in environmental epidemiology and toxicology.  Classically, studies have examined the adverse health effects of different chemicals one at a time, but there is concern that certain chemicals may act together to amplify each other's effects.  Such amplification is referred to as {\em synergistic} interaction, while chemicals that inhibit each other's effects have {\em antagonistic} interactions.  Current approaches for assessing the health effects of chemical mixtures do not explicitly consider synergy or antagonism in the modeling, instead focusing on either parametric or unconstrained nonparametric dose response surface modeling.  The parametric case can be too inflexible, while nonparametric methods face a curse of dimensionality that leads to overly wiggly and uninterpretable surface estimates. We propose a Bayesian approach that decomposes the response surface into additive main effects and pairwise interaction effects, and then detects synergistic and antagonistic interactions. Variable selection decisions for each interaction component are also provided.
This Synergistic Antagonistic Interaction Detection (SAID) framework is evaluated relative to existing approaches using simulation experiments and an application to data from NHANES.
\end{abstract}

{\small \textsc{Keywords:} {\em Additive model, Bayesian, Dose response surface, Nonparametric Regression, Pairwise interaction, Positivity constraint, synergy, Exposure mixture}}

\section{Introduction}\label{sec:introduction}

There is considerable concern that humans are exposed to a variety of potentially adverse chemicals, and these exposures may have adverse health effects.  Classically, such health effects have been assessed through one exposure at a time studies, either  collecting {\em in vitro} or {\em in vivo} data at different doses of a single chemical or  focusing analyses of observational epidemiology studies on single exposures. Then, in order to predict the overall health effect of a mixture of different exposures, one needs to make strong assumptions, such as additivity and independence of exposures.  Unfortunately, such predictions will misestimate an individual's true adverse health risk if certain chemicals interact.  Of particular concern are {\em synergistic interactions} in which the adverse effect of one chemical is increased due to the presence of another chemical.  If regulatory agencies are 
unaware of such synergistic interactions in setting pollution guidelines, they may inadvertently permit substantial pollution-induced mortality and morbidity.  On the other hand, {\em antagonistic interactions} may lead to additive models over-predicting risk in which case certain chemicals may be over-regulated.

In this paper, our goal is to analyze the effect of heavy metal exposure on human kidney function. The impact of metal exposures on human renal function has garnered considerable attention from the epidemiological community. A review of the existing literature shows evidence of degraded kidney function following prolonged exposure to heavy metals \citep{pollack2015kidney, luo2020metal}. 
Most of this literature is based on simple one exposure-at-a-time correlation analyses from observational epidemiology data. It is clearly of interest to study joint effects of multiple metals, while adjusting for covariates that may act as potential confounding variables. Mechanistically, it seems unlikely that metals have a simple additive effect on kidney function. For example, healthy kidney function may continue with a single metal exposure at relatively low doses, but as dose increases and/or additional metal exposures are added it is likely that kidney function may worsen rapidly. Such a dose response surface would be reflective of a synergistic interaction.
In our analyses, we focus on 2015-16 data from NHANES. These data contain information on heavy metals found in spot urine collections and also on urine creatinine, which can be used as a marker of kidney function when urinary dilution issues are appropriately accounted for.

There is a need for new statistical tools for accommodating interactions in assessing health effects of mixtures of chemical exposures.  For recent developments, refer to 
 \cite{joubert2022powering}.
A broad set of strategies have been taken in this literature. The first extends generalized linear models to include a quadratic term characterizing pairwise interactions. 
\cite{herring2010nonparametric} takes a Bayesian approach to obtain posterior inclusion probabilities for each interaction. For recent more elaborate approaches, refer to  \cite{ferrari2021bayesian, wang2019penalized}.  Such methods can identify synergistic versus antagonistic interactions through signs of the quadratic coefficients, but rely on a restrictive parametric model.  Nonparametric alternatives include 
Bayesian Additive Regression Trees (BART) \citep{chipman2010bart} and Gaussian processes (GP) \citep{williams2006gaussian}. Bayesian Kernel Machine Regression (BKMR) \citep{bobb2015bayesian} implements GP regression with variable selection motivated by the mixtures problem. These approaches often produce wiggly and difficult to interpret dose response surfaces. To alleviate such issues, \cite{antonelli2020estimating} 
and \cite{samanta2022estimation}
nonparametrically model nonlinear pairwise interactions while conducting variable selection, with the latter article controlling false discovery rates in interaction detection. However, none of these nonparametric approaches identify synergistic or antagonistic interactions.
A third strategy is focused on bridging between parametric and nonparametric approaches. As an example, MixSelect \citep{ferrari2020identifying}
includes a nonparametric deviation from a quadratic regression using a Gaussian process, constrained to be orthogonal to the quadratic regression.  To simplify dose response modeling, 
 \cite{molitor2010bayesian} propose  profile regression based on clustering exposures, while \cite{czarnota2015assessment} use a single index model based on a weighted sum of the exposures. Neither approach allows for inferences on interactions.

We propose a nonparametric Bayesian approach for identifying synergistic or antagonistic interactions between $p$ exposures. The dose response surface is decomposed additively into $p$ main effects and $\binom{p}{2}$ pairwise interactions. 
Ruling out higher order interactions substantially reduces dimensionality, while aiding interpretability.  
Each pairwise interaction is decomposed into the difference of two non-negative functions, facilitating selection of synergistic or antagonistic effects.
The proposed {\em synergistic antagonistic interaction detection} (SAID) approach is shown to improve over unconstrained nonparametric regression, leading to superior performance in simulation studies. To avoid modeling complicated pairwise surfaces, we focus on a special case to dramatically reduce dimensionality and obtain substantial computational gains. The resulting interaction surface retains a high degree of flexibility in capturing nonlinear surfaces. We also outline an approach to carry out variable selection on the pairwise interactions. 
The SAID approach employs a computationally efficient Markov chain Monte Carlo (MCMC) algorithm to sample from the posterior distribution.

The paper is outlined as follows. We describe the urine creatinine and metal exposure data in Section \ref{sec:application} and outline challenges involved with analyzing these data. Section \ref{sec:methodology} provides details of our SAID approach. 
Section \ref{sec:simulation} compares SAID and existing competitors in simulation experiments, highlighting benefits of using SAID in terms of estimation accuracy, valid uncertainty quantification, and variable selection. Section \ref{sec:SIMapplication}
uses SAID to investigate synergistic and antagonistic interactions of metal exposures in predicting urine creatinine levels based on NHANES 2015-16 data. Section \ref{sec:discussion} includes concluding remarks.

\section{Kidney Function Data Analysis}\label{sec:application}

\subsection{Motivation}
\label{sec:motivation}

We are interested in assessing the impact of exposure to heavy metals on human renal function. The concentration of creatinine in the blood or urine is an established biomarker of kidney function \citep{kashani2020creatinine, barr2005urinary}. Exposure to heavy metals has been linked to  changes in renal function and kidney damage \citep{pollack2015kidney}.  In addition to being indicative of renal function, creatinine levels are also related to muscle mass  \citep{forbes1976urinary, baxmann2008influence}. Urine creatinine has been shown to be positively associated with serum creatinine levels
in studies excluding individuals with 
chronic kidney disease (CKD) 
 \citep{jain2016associated}. Existing studies, such as the one in \cite{luo2020metal}, show a statistical association between biomarkers of chronic kidney disease and blood levels of the heavy metals cobalt (Co), chromium (Cr), mercury (Hg), and lead (Pb) using NHANES data. \cite{kim2015environmental} find an association between CKD and elevated levels of cadmium (Cd) in blood. These studies primarily focus on marginal associations between individual chemicals and CKD. Our focus is instead on studying how multiple heavy metal exposures relate to kidney function measured through urine creatinine, with an emphasis on identifying synergistic and antagonistic interactions.

\subsection{Data Description}
\label{sec:data-description}

We analyze data that were collected by NHANES for the year $2015$. We consider urine analyte levels of the $13$ heavy metals: Antimony (Sb), Barium (Ba),  Cadmium (Cd), Cesium (Cs), Cobalt (Co), Lead (Pb), Manganese (Mn),  Molybdenum (Mo), Strontium (Sr), Thallium (Tl), Tin (Sn), Tungsten (W), and Uranium (U) as exposure variables and the level of urine creatinine (uCr) as the response variable. The unit of measurement for exposure variables is $\mu$g/mL while the unit for the response variable is mg/dL. We also adjust for age (in years), sex (male or female), ethnicity (Non-Hispanic White, Non-Hispanic Black, Mexican American, Other Hispanic, and Other), and body mass index (BMI). 

We start off by only considering the data for $2300$ individuals for which values of the response variable (uCr) are not missing. In this data set, we first remove the individuals having serious kidney disease. For this, we used the albumin-to-creatinine ratio (ACR) and removed the subjects who had albuminuria, defined as their ACR satisfying $\mbox{ACR} \geq 30$ mg/g. The resulting data set has $2008$ individuals. Furthermore, the response variable for one individual was less than the limit of detection (LOD); this singular data point was removed for purposes of analysis. For the metal exposures, there are both missing data points and data points which are lower than their respective LODs. We remove any individuals with an exposure entry missing. Finally, we removed the entries of the covariate values which were missing. In summary, neither the response variable nor the covariate variables considered in our analysis have any missing data or data less than the LOD; however, the metal exposure data contain points less than the LOD. After removing the individuals missing response, covariate, and exposure measurements, the final sample size is $n = 1979$.

We look at various summary measures of the response and exposure variables. The uCr levels are right skewed, with mean $119.67$ mg/dL, median $104.00 $ mg/dL, and semi-interquartile range (SIQR) $52.50$ mg/dL. The heavy metal exposure variables are mild to moderately correlated, with a heat plot of the  correlation matrix of the raw exposures shown in Figure \ref{fig:corplot}. More specifically, $58$ out of the $78$ pairwise correlation coefficients are less than $0.20$. The median correlation is $0.10$ with an SIQR of $0.08$, while the maximum correlation is $0.51$, between Barium and Strontium. Furthermore, there is wide variation between the range of different exposure variables. For example, the maximum recorded exposure to Cesium in the sample is $110.87$ $\mu$g/mL, while that of Uranium is only $0.77$ $\mu$g/mL. 

\begin{figure}
    \centering
    \includegraphics[scale=0.3]{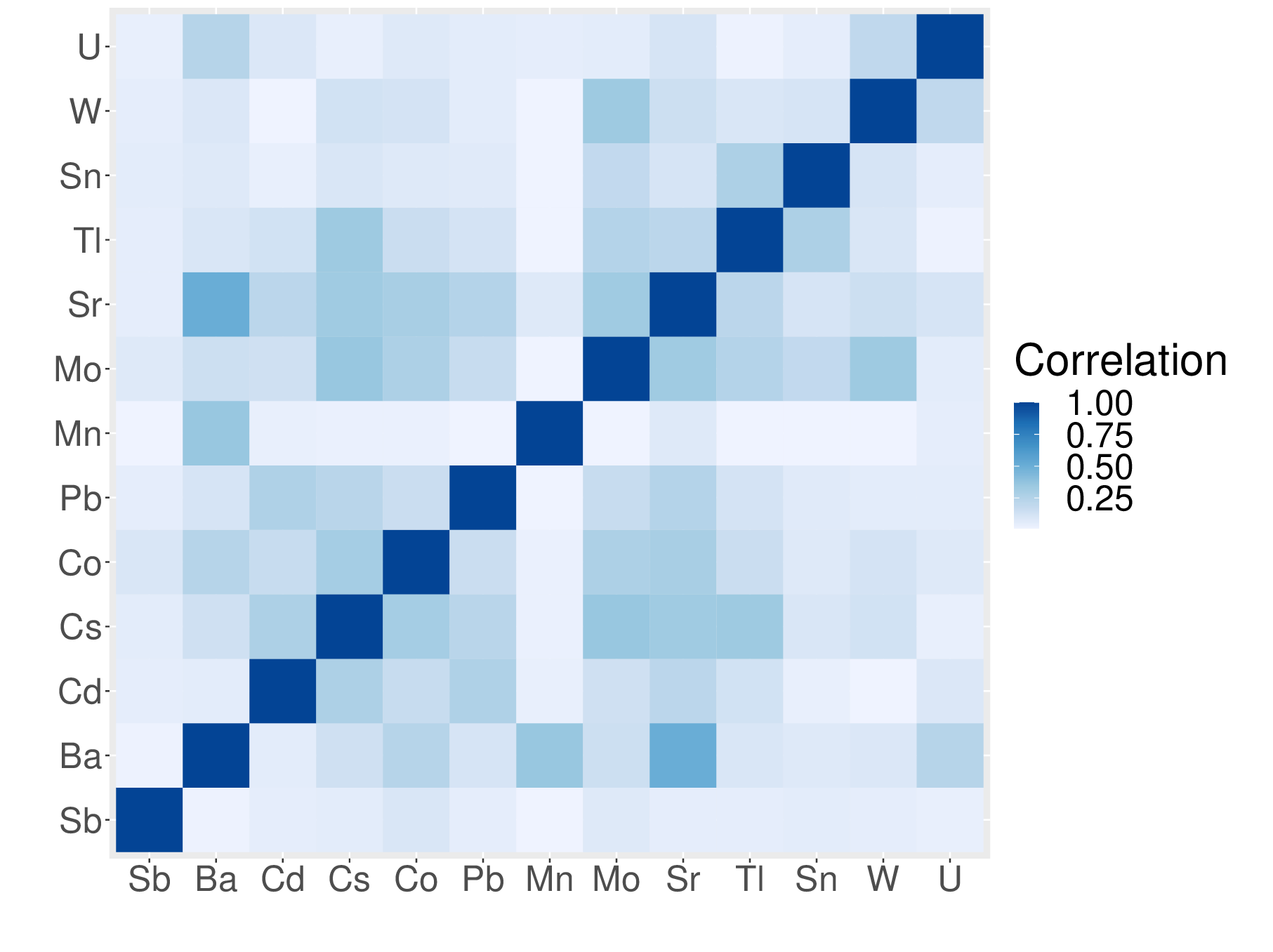}
    \caption{Heat plot showing correlations between heavy metals in NHANES 2015-16 data.}
    \label{fig:corplot}
\end{figure}


\subsection{Urinary Dilution}
\label{sec:urine-dilution}

In NHANES, the urine data is collected from spot collections. The urine sample collection procedure is regulated to ensure consistent specimen collection across differing water loadings of the subjects. Regardless, due to the nature of data collection, the concentrations of the heavy metals and creatinine levels can vary substantially across the sampled participants due to different hydration levels. For this reason, we adjust the raw response (creatinine) and exposure variables (heavy metals) measured in the urine using the urine flow rate (UFR, measured in mL/min) of the subjects. Adjusting for dilution using urine flow rate also helps reduce wide variation in exposure measurements. 
Adjusting urinary measures for variation in water loading across individuals using urine flow rates is prevalent in existing literature \citep{jeng2021clinical, middleton2016assessing, hays2015variation}. Suppose for the $i$-th subject, the urine creatinine level is $C_i$ and the urinary concentrations of the heavy metals is the vector ${\bf M}_i = (M_{i1}, \ldots, M_{ip})^{\T}$. Provided the urine flow rate for the $i$-th subject is $\tau_i > 0$, we adjust for urinary dilution by multiplying both the original response and exposure variables by $\tau_i$. That is, the urine flow adjusted response variable and exposure vector are $\tau_i C_i$ and $\tau_i {\bf M}_i$, respectively, which we refer to as the dilution-adjusted  response and exposure variables, respectively. 

 \subsection{Issues with existing approaches}
 \label{sec:issues-current-approaches}

As two different types of state-of-the-art methods, we apply MixSelect and BKMR on the heavy metals and creatinine data.
We (natural) log transform the response and exposures prior to analysis. Both BKMR and MixSelect conduct Bayesian variable selection, providing posterior inclusion probabilities (PIPs) for each exposure.
Excluding exposures having PIPs $<0.5$, 
BKMR excludes Barium (PIP $\approx$ 0), Lead (PIP $\approx$ 0), and Tin (PIP = 0.13), while MixSelect excludes Cobalt (PIP = 0.29) and Uranium (PIP = 0.11).
With these exposures excluded, BKMR produces a $10$ dimensional dose response surface in the remaining metals. Although inferences on main effects and pairwise interactions could rely on examining univariate and bivariate cross-sections of the $10$-dimensional surface, such results are exploratory and difficult to interpret. In contrast, MixSelect provides PIPs for both the main effects and pairwise interactions. However, it is not reassuring that MixSelect excludes different exposures than BKMR. In addition, all the interaction PIPs for MixSelect are $<0.11$, indicating that none of the interactions are selected.
 
We are motivated by these preliminary results to develop an approach that is flexible enough to characterize nonlinear dose response surfaces, while allowing us to formally detect pairwise interactions that are synergistic or antagonistic.  Flexible nonparametric dose response surface methods can characterize synergistic or antagonistic pairwise interactions but they tend to be hidden within a flexible multivariate surface, as highlighted for BKMR above. 
We describe our proposed Synergistic Antagonistic Interaction Detection (SAID) approach in Section \ref{sec:methodology}. SAID will be applied to the motivating metals and creatinine data in Section \ref{sec:SIMapplication}. 


\section{Structured Interaction Modeling Approach}
\label{sec:methodology}

\subsection{Basic Modeling Structure}
\label{sec:prelim}
Denote the health outcome of interest by $y_i$, for individuals $i=1,\ldots,n$; we assume $y_i \in \mathbb{R}$ but the approach can be easily modified to allow $y_i \in \{0,1\}$ or ordered categorical or count responses. We denote the exposures as 
$\mathbf{x}_i = (x_{i1}, \ldots, x_{ip})^{\T} \in [0,1]^p$, assuming without loss of generality they each fall within $[0,1]$,
and the covariates as $\mathbf{z}_i \in \mathbb{R}^q$. The model we consider is given by
\begin{equation}
    \label{eq:our-model-base}
    y_i = H(\mathbf{x}_i) + \eta^{\T} \mathbf{z}_i + \epsilon_i,\quad 
    \epsilon_i \sim N(0,\sigma^2), 
\end{equation}
where $H(\mathbf{x}) = H(x_1, \ldots, x_p)$ is the dose response function of the exposures, and we follow common practice in adjusting for the covariates linearly. 
Without loss of generality, we assume that higher values of the health outcome $y$ represent worse health in interpreting exposure effects.
As in \cite{wei2020sparse, brezger2006generalized}, we characterize the dose response function of the exposures $H({\bf x})$ via an additive expansion into main effects and pairwise interaction terms: 
\begin{equation}
\label{eq:our-model}
    H(x_1, \ldots, x_p) = \alpha + \sum_{j=1}^{p} f_j(x_j) + \sum_{1 \leq u < v \leq p} h_{uv}(x_u, x_v),
\end{equation}
where $\alpha$ is an intercept, $f_j(x_j)$ is the main effect of the $j$th exposure for $j = 1, \ldots, p$, and $h_{uv}(x_u, x_v)$ for $1 \leq u < v \leq p$ is a pairwise interaction. 
The decomposition in \eqref{eq:our-model} allows interpretation of different components in a factorization of the dose response surface $H$ into main effects and pairwise interactions. 

The primary innovation of this paper is
our approach for inferences on the pairwise interaction component.
Section \ref{sec:interaction-effects} introduces the general structure of our model for the $h_{uv}$s. Section \ref{sec:variable-selection} describes an approach for interaction selection.  Section \ref{sec:main-effects} describes our model for the main effects. Section \ref{sec:posterior-sampling} contains details on posterior computation.

\subsection{Modeling pairwise interactions}
\label{sec:interaction-effects}
We propose a model for the pairwise interactions $h_{uv}$s. Assume that $h_{uv}$ is continuous and admits finite partial derivatives up to second order. For $\mathbf{a} = (a_1, \ldots, a_d)^{\T} \in \mathbb{R}^d$, we let $\mathbf{a}^2 = (a_1^2, \ldots, a_d^2)^{\T} \in \mathbb{R}^d$. For a non-negative function $f : [0,1]^2 \to [0,\infty)$, we say $f \equiv 0$ if $f(x_1, x_2) = 0$ for all $(x_1, x_2) \in [0,1]^2$; otherwise, if $f(x_1, x_2) > 0$ for at least one $(x_1, x_2) \in [0,1]^2$, $f \not \equiv 0$. 

In conducting inferences on interactions in environmental epidemiology, it is of primary interest to assess whether exposures work together to magnify their health effects (synergy), tend to block each other's effects (antagonism), or have effectively no interaction (null).
Hence, for exposures $u$ and $v$, we focus on the following classes of interactions. 
\begin{enumerate}
    \label{def:synergistic}
\item {\em Synergistic} if 
$h_{uv}(x_u, x_v) \ge 0$ for all $(x_u, x_v) \in [0,1]^2$ with at least one strict inequality.

\item \label{def:antagonistic}
{\em Antagonistic} if
$h_{uv}(x_u, x_v) \le 0$ for all $(x_u,x_v) \in [0,1]^2$ with at least one strict inequality.
\item \label{def:not-present}
{\em Null} if 
$h_{uv}(x_u, x_v)=0$ for all $(x_u, x_v) \in [0,1]^2$.
\end{enumerate}

Our primary interest is in classifying interactions $h_{uv}$ in terms of Definitions 1-3. Although we will not rule out other types of interaction surfaces {\em a priori}, we develop inference approaches that are targeted towards this three class hypothesis testing problem, which we denote by the Synergistic-Antagonistic-Null (SAN) class.  This is accomplished in a Bayesian manner with a carefully-structured model for $h_{uv}$ that shrinks towards the space of synergistic, antagonistic, and null interactions.

If we knew {\em a priori } that $h_{uv}$ was either  synergistic or null, we could let $h_{uv} = P_{uv}$, where $P_{uv} \geq 0$. If $P_{uv}(x_u, x_v) > 0$ for at least one $(x_u, x_v) \in [0,1]^2$ then $h_{uv}$ is synergistic; otherwise, $P_{uv} \equiv 0$ and $h_{uv}$ is a null interaction. Similarly, if we knew {\em a priori} that $h_{uv}$ was either antagonistic or null, we could let $h_{uv} = - N_{uv}$, where $N_{uv} \geq 0$. However, the sign of the interaction $h_{uv}$ is usually unknown and pairwise interactions may be only approximately synergistic or antagonistic. To allow for such complexities, we let 
\begin{equation}
\label{eq:int-decomposition}
    h_{uv}(x_u, x_v) = P_{uv}(x_u, x_v) - N_{uv}(x_u, x_v),
\end{equation}
for all $(x_u,x_v) \in [0,1]^2$, where $P_{uv}, N_{uv} : [0,1]^2 \to [0, \infty)$ are non-negative functions.  If $P_{uv} \not \equiv 0$ and $N_{uv} \equiv 0$, $h_{uv}$ is synergistic; while if $P_{uv} \equiv 0$ and $N_{uv} \not \equiv 0$, $h_{uv}$ is antagonistic. If both $P_{uv} = N_{uv} \equiv 0$, $h_{uv}$ is null. Thus, shrinking either $P_{uv}$ or $N_{uv}$ or both to zero shrinks $h_{uv}$ towards the class of SAN interactions. 
Considering the overall interaction surface $$I(\mathbf{x}) = \sum_{1 \leq u < v \leq p} h_{uv}(x_u,x_v) = \sum_{1 \leq u < v \leq p} P_{uv}(x_u, x_v) - \sum_{1 \leq u < v \leq p} N_{uv}(x_u, x_v),$$ shrinking some of the interactions $h_{uv}$  towards the SAN class is equivalent to a sparse estimation problem where we impose sparsity on $P_{uv}$ and $N_{uv}$ for $1 \leq u < v \leq p$. This motivates us to develop a framework to encourage sparsity among $(P_{uv}, N_{uv})_{1 \leq u < v \leq p}$ through well-constructed prior distributions.




Instead of modeling $P_{uv}$ and $N_{uv}$ as complicated bivariate functions, we further decompose $P_{uv}$ and $N_{uv}$ as products of non-negative univariate functions:
\begin{equation}
\label{eq:interaction-rank-1}
    \begin{aligned}
    h_{uv}(x_u, x_v) & = 
    P_{uv}(x_u, x_v) - N_{uv}(x_u, x_v)\\
    & = P_{uv, 1}(x_u) P_{uv, 2}(x_v) - N_{uv, 1}(x_u) N_{uv, 2}(x_v).
    \end{aligned}
\end{equation}
This model is much more flexible than commonly used quadratic regression, which lets $h_{uv}(x_u,x_v)=\gamma_{uv}x_ux_v$. To model $P_{uv, 1}, P_{uv, 2}$, $N_{uv, 1}, N_{uv, 2}$
as flexible one-dimensional non-negative functions, we rely on squaring B-spline expansions \citep{de1978practical} as follows:
\begin{equation}
    \begin{aligned}
      P_{uv, 1}(x_u) = \{\mathbf{s}_u(x_u)^{\T} \theta_{uv, 1}\}^2 &, \quad P_{uv, 2}(x_v) = \{\mathbf{s}_v(x_v)^{\T} \phi_{uv, 1}\}^2\\
      N_{uv, 1}(x_u) =  \{\mathbf{s}_u(x_u)^{\T} \theta_{uv, 2}\}^2 &, \quad N_{uv, 2}(x_v) = \{\mathbf{s}_v(x_v)^{\T} \phi_{uv, 2}\}^2,
    \end{aligned}
    \label{eq:sqrspline}
\end{equation}
where $\mathbf{s}_u(x_u) = (s_{u1}(x_u), \ldots, s_{um}(x_u))^{\T}$ denote B-splines for the $u$-th variable, chosen so that $s_{uj}(0) = 0$ for $j=1,\ldots,m$.  This is obtained by discarding the intercept spline; refer to Section \ref{supp:bspline} of the Supplementary Material for further details. Constraining $s_{uj}(0) = 0$ for $j=1,\ldots,m$ and $u=1,\ldots,p$ ensures that the interaction $h_{uv}$ satisfies $h_{uv}(x_u, 0) = 0$ for all $x_u \in [0,1]$ and $h_{uv}(0, x_v) = 0$ for all $x_v \in [0,1]$, thereby making $h_{uv}$ identifiable. The components $P_{uv}$ and $N_{uv}$ are not identifiable individually as they enter the likelihood only via their difference $h_{uv} = P_{uv} - N_{uv}$. However, this is not an issue since our focus is only in estimating the interaction $h_{uv}$, using sparsity on the components $P_{uv}, N_{uv}$ to shrink $h_{uv}$ towards the class of SAN interactions. For further details on identifiability of pairwise interactions, we refer the reader to Sections \ref{supp:bspline} and \ref{supp:model-id} of the Supplementary Material.

To estimate $h_{uv}$ in a Bayesian manner, we define a prior distribution $\pi(\mathbf{\Psi}_{uv})$ on the basis coefficients $\mathbf{\Psi}_{uv} = (\theta_{uv,1}^{\T}, \phi_{uv,1}^{\T}, \theta_{uv,2}^{\T}, \phi_{uv,2}^{\T})^{\T}$. Our choice of $\pi(\mathbf{\Psi}_{uv})$ is motivated by encouraging sparsity amongst $P_{uv}$ and $N_{uv}$ as discussed earlier. Instead of using spike-and-slab priors \citep{george1993variable, mitchell1988bayesian, ishwaran2005spike} which lead to computational inefficiency, we consider continuous shrinkage priors on the spline coefficients. For an overview of continuous Bayesian shrinkage priors and advantages over spike-and-slab approaches, we refer the reader to \cite{bhattacharya2012bayesian}.
For $1 \leq u < v \leq p$, we first let
\begin{equation}
\label{eq:rank-1-indpt-priors}
    \begin{aligned}
    \theta_{uv, 1}, \phi_{uv, 1} \mid \tau_{uv,1}, \nu & \sim N\left(0, \nu^2 \tau_{uv, 1}^2 \Sigma_0\right),\quad \\
    \theta_{uv, 2}, \phi_{uv, 2} \mid \tau_{uv,2}, \nu & \sim N\left(0, \nu^2 \tau_{uv, 2}^2 \Sigma_0\right),
    \end{aligned}
\end{equation}
where $\Sigma_0$ is a P-spline covariance matrix as in \cite{lang2004bayesian}; for more details, refer to Section \ref{supp:pspline} of the Supplementary Material. Let $p(\mathbf{\Psi}_{uv} \mid \tau_{uv,1}, \tau_{uv,2}, \nu)$ denote the conditional prior density of $\mathbf{\Psi}_{uv}$ obtained from \eqref{eq:rank-1-indpt-priors}.
The prior is completed with hyperpriors for the local and global variance parameters:
\begin{equation}
\label{eq:int-prior}
    \tau_{uv, 1}, \tau_{uv, 2} \sim C^{+}(0,1),\quad 
    \nu \sim C^{+}(0,1),
\end{equation}
where $C^{+}(0,1)$ denotes a half-Cauchy prior. 

Priors \eqref{eq:rank-1-indpt-priors}-\eqref{eq:int-prior} are chosen to have a global-local shrinkage form motivated by the horseshoe prior \citep{carvalho2009handling}. 
Small values of the global parameter $\nu$
favor most of the interactions $h_{uv}\approx 0$, while the heavy-tailed prior on local parameters $\tau_{uv,1}$ and $\tau_{uv,2}$ allow certain interactions to have $P_{uv}$ and $N_{uv}$ components arbitrarily far from zero. Furthermore, the densities of $\tau_{uv,1}^2$ and $\tau_{uv,2}^2$ have infinite spikes at $0$, allowing either $P_{uv}$ or $N_{uv}$ to be efficiently shrunk to zero. This successfully shrinks $h_{uv}$ towards the class of SAN interactions. As a result, our formulation is able to incorporate prior information as follows. Firstly, it encourages global sparsity across the $\binom{p}{2}$ interactions as $p$ increases, leading to an overall small number of non-null interactions as is expected in practice. Secondly, it is able to target potentially non-linear synergistic or antagonistic interactions. Thirdly, as we refrain from imposing strict constraints, our approach also maintains flexibility to detect interactions that do not belong to the SAN class. We describe posterior computation with this setup in Section \ref{sec:posterior-sampling}.

We also tried other approaches to model $h_{uv}$. Instead of squaring unconstrained functions, one could potentially use a Bayesian monotone spline formulation based on restricting the sign of the basis coefficients. However, we found squaring an unconstrained function to provide better estimates, particularly when the non-negative function being modeled is near $0$. We also tried using a tensor product spline model for the bivariate functions, but using products of univariate functions had superior performance. Finally, we tried an approach to further shrink the interactions towards the class of SAN interactions using an additional penalty term in the prior, but found this approach to provide little to no improvement in simulations and the motivating application while significantly complicating posterior computation. We describe these alternative approaches in more detail in Section \ref{supp:other} of the Supplementary Material.

\subsection{Variable Selection}
\label{sec:variable-selection}
To select the non-null interactions $h_{uv}$s, we first decompose
\begin{equation}
    \label{eq:canonical-decomp}
    h_{uv}(x_u, x_v) = h_{uv}^{+}(x_u, x_v) - h_{uv}^{-}(x_u, x_v),
\end{equation}
where $h_{uv}^{+}(x_u, x_v) = \max\{h_{uv}(x_u, x_v), 0\} \geq 0$ and  $h_{uv}^{-}(x_u, x_v) = \max\{-h_{uv}(x_u, x_v), 0\} \geq 0,$ for any $(x_u, x_v) \in [0,1]^2$. 
The interaction $h_{uv}$ is synergistic if and only if $h_{uv}^{+} \not \equiv 0$ and $h_{uv}^{-} \equiv 0$, antagonistic if and only if $h_{uv}^{+} \equiv 0$ and $h_{uv}^{-} \not \equiv 0$, and null if and only if $h_{uv}^{+} = h_{uv}^{-} \equiv 0$. These conditions may be rewritten in terms of their integrals since for a continuous non-negative function $P$, $P \equiv 0$ if and only if $\int P = 0$ and $P \not \equiv 0$ if and only if $\int P > 0$.

Since we use continuous shrinkage priors to estimate $h_{uv}$, there is zero posterior probability of $h_{uv}^{+}$ or $h_{uv}^{-}$ being {\em exactly} zero. In order to carry out variable selection, we first construct {\em sparsified} MCMC samples of $\int h_{uv}^{+}$ and $\int h_{uv}^{-}$ by adapting the sequential 2-means (S2M) approach outlined in \cite{li2017variable}. The S2M algorithm performs a sequential 2-means clustering of the elements of a vector with non-negative entries. Based on the output of the S2M algorithm, we set the entries of the vector assigned to the cluster with the smaller mean as exactly zero. We carry out the S2M clustering in our setting as follows, given MCMC samples of $\int h_{uv}^{+}$ and $\int h_{uv}^{-}$.
 
For a vector $v$ having zero and positive entries,  let $\{v = 0\}$ and $\{v > 0\}$ denote the set of indices $j$ such that $v_j$, the $j$ entry of $v$, exactly equals $0$ and is positive, respectively.  Suppose $S_{uv}^{+} = \int h_{uv}^{+}$ and $S_{uv}^{-} = \int h_{uv}^{-}$. Let the posterior samples of $S_{uv}^{+}$ and $S_{uv}^{-}$ be denoted by $\mathcal{S}_{uv}^{+}$ and $\mathcal{S}_{uv}^{-}$, respectively, both being $N_{MC} \times 1$ vectors where $N_{MC}$ denotes the number of posterior samples obtained. We  stack $\mathcal{S}_{uv}^{+}$ and $\mathcal{S}_{uv}^{-}$ together as $\tilde{\mathcal{S}}_{uv} = (\mathcal{S}_{uv}^{+ \T}, \mathcal{S}_{uv}^{- \T})^{\T}$ and further stack all $\tilde{\mathcal{S}}_{uv}$ for $1 \leq u < v \leq p$, calling the resulting large vector $\tilde{\mathcal{S}}.$ We then apply the S2M approach of \cite{li2017variable} on $\tilde{\mathcal{S}}$, which clusters the entries of $\tilde{\mathcal{S}}$ into two groups. We proceed to manually set all entries of $\tilde{\mathcal{S}}$ assigned to the cluster with smaller mean as $0$. This provides us with the {\em sparsified} vector $\hat{\mathcal{S}}$ based on $\tilde{\mathcal{S}}$. Let $\hat{\mathcal{S}}_{uv}^{+}$ and $\hat{\mathcal{S}}_{uv}^{-}$ denote the {\em sparsified} MCMC samples of $\mathcal{S}_{uv}^{+}$ and $\mathcal{S}_{uv}^{-}$, respectively, obtained from appropriate subsetting of the large sparsified vector $\hat{\mathcal{S}}.$ Let $C_{uv} = \{\int h_{uv}^{+} = 0\}$ and $D_{uv} = \{\int h_{uv}^{-} = 0\}.$
 
 After this post-processing step, we now compute Monte Carlo estimates of the posterior probabilities of $h_{uv}$ being null, synergistic, and antagonistic, given by $\mathbf{P}\left(C_{uv} \cap D_{uv}\right)$, $\mathbf{P}\left(C_{uv}^{c} \cap D_{uv}\right),$ and $\mathbf{P}\left(C_{uv} \cap D_{uv}^{c}\right),$ respectively, as follows:
 \begin{equation}
     \label{eq:post-prob}
     \begin{aligned}
         \hat{\mathbf{P}}\left(C_{uv} \cap D_{uv}\right) & = |\{\hat{\mathcal{S}}_{uv}^{+} = 0\} \cap \{\hat{\mathcal{S}}_{uv}^{-} = 0\}| / N_{MC},\\
     \hat{\mathbf{P}}\left(C_{uv}^{c} \cap D_{uv}\right) & = |\{\hat{\mathcal{S}}_{uv}^{+} > 0\} \cap \{\hat{\mathcal{S}}_{uv}^{-} = 0\}| / N_{MC},\\
     \hat{\mathbf{P}}\left(C_{uv} \cap D_{uv}^{c}\right) & = |\{\hat{\mathcal{S}}_{uv}^{+} = 0\} \cap \{\hat{\mathcal{S}}_{uv}^{-} > 0\}| / N_{MC},
     \end{aligned}
 \end{equation}
 where for a finite set $A$, $|A|$ denotes its cardinality. In particular, the posterior inclusion probability (PIP) of the interaction $h_{uv}$ is calculated as
\begin{equation}
    \label{eq:PIP}
    \mbox{PIP}(h_{uv}) = 1 - \hat{\mathbf{P}}(C_{uv} \cap D_{uv}). 
\end{equation}
As a rough rule of thumb, loosely based on \cite{kass1995bayes}, one can view 
interactions having $\mbox{PIP}(h_{uv})\in (0.5,0.75)$ as barely worth a mention in terms of weight of evidence against the null, 
$\mbox{PIP}(h_{uv})\in [0.75,0.95)$ as weakly to moderately suggestive, 
$\mbox{PIP}(h_{uv})\in [0.95,0.99)$ as strong evidence, and $\mbox{PIP}(h_{uv})\ge 0.99$ as very strong evidence. Our approach for variable selection performed well across simulations and the motivating application. The additional cost of computation to implement the S2M approach is negligible relative to the cost of obtaining the original MCMC samples. Furthermore, the approach is fully automatic and requires no further tuning in order to be implemented. 


\subsection{Main Effects and Other Parameters}
\label{sec:main-effects}
For reasons of identifiability, we assume the main effects $f_1, \ldots, f_p$ either satisfy $\int_{0}^{1} f_j(x_j) \, dx_j = 0$ for each $j = 1, \ldots, p$ or $f_j(0) = 0$ for $j = 1, \ldots, p$. We call the former set of conditions as \textit{integral constraints} and the latter as \textit{origin constraints}. Suppose that for exposure $j$, $\{b_{j, 1}(\cdot), \ldots, b_{j, d}(\cdot)\}$ represent  one-dimensional basis functions, such as B-splines. To enforce the identifiability conditions, we let the functions $b_{j, 1}, \ldots, b_{j, d}$ either satisfy $\int_{0}^{1} b_{j, u}(x_j) \, dx_j = 0$ for $u=1,\ldots,d$, $j = 1, \ldots, p$ if the main effects satisfy the \textit{integral constraint}, or $b_{j, u}(0) = 0$ for $u=1,\ldots,d$, $j = 1, \ldots, p$ if the main effects satisfy the \textit{origin constraint}. We provide further details in Sections \ref{supp:bspline} and \ref{supp:model-id} of the Supplementary Material. Let $\mathbf{b}_j(x_j) = (b_{j,1}(x_j), \ldots, b_{j,d}(x_j))^{\T}$. We now model $f_j$ as 
\begin{equation}
    \label{eq:main-effects}
    f_j(x_j) = \sum_{u=1}^{d} b_{j, u}(x_j) \gamma_{j, u} = \mathbf{b}_j(x_j)^{\T} \, \mathbf{\Gamma}_j,
\end{equation}
where $\mathbf{\Gamma}_j = (\gamma_{j,1}, \ldots, \gamma_{j,d})^{\T}$. To estimate the coefficient vector $\mathbf{\Gamma}_j$, we assume a univariate P-spline prior distribution \citep{lang2004bayesian} on $\mathbf{\Gamma}_j$. 
With an appropriate positive-definite matrix $\Sigma_{M}$, we can write this prior as
\begin{equation}
    \label{eq:main-effect-priors-all}
    \mathbf{\Gamma}_j \mid \lambda_j  \sim N\left(0, \frac{\Sigma_{M}}{\lambda_j}\right),\quad 
    \lambda_j  \sim G(a, a),
\end{equation}
where $\lambda_j > 0$ is a scale parameter corresponding to the $j$-th main effect. The sensitivity to the choice of hyperparameter $a$ for the distribution of $\lambda_j$ has been investigated in \cite{lang2004bayesian}; we found $a = 0.5$ to work well across simulations and applications.

To complete prior specification for the parameters in \eqref{eq:our-model-base}-\eqref{eq:our-model}, we put vague prior distributions on the intercept $\alpha$ and covariate effects $\eta$, and a non-informative prior 
on the measurement error variance $\sigma^2$:
\begin{equation}
\begin{aligned}
\label{eq:other-priors}
  \alpha & \sim N(0, 10^{4}), \quad
    \mathbf{\eta} \sim N(0, 10^{4} \, \mathbb{I}_q),\quad
    \sigma^2 \sim \sigma^{-2}.
\end{aligned}
\end{equation}

\subsection{Posterior Sampling}
\label{sec:posterior-sampling}

We rely on a Hamiltonian Monte Carlo (HMC) \citep{neal2011mcmc, betancourt2015hamiltonian, hoffman2014no}-within-Gibbs algorithm, with HMC used to sample the interaction parameters $\mathbf{\Psi}_{uv}$ for $1 \leq u < v \leq p$, and other parameters updated in Gibbs steps. Although this approach is highly effective in our experiments, it is important to carefully choose the step size $e_0$ and step length $L_0$ in HMC. In practice, we found $e_0 \approx 0.01$ and $L_0 \sim 10$ to work well. We found HMC to provide faster convergence of the chain and larger effective sample sizes when compared with standard Metropolis-Hastings (MH) \citep{metropolis1953equation, hastings1970monte} algorithms relying on normal random walk proposals. We provide a brief description of HMC in Section \ref{supp:hmc} of the Supplementary Material.



Let $\mathbf{y} = (y_1, \ldots, y_n)^{\T} \in \mathbb{R}^n$ and $\mathbf{Z} = [\mathbf{z}_1 \mid \ldots \mid \mathbf{z}_n]^{\T} \in \mathbb{R}^{n \times q}$. Following Sections \ref{sec:interaction-effects} and \ref{sec:main-effects}, let $B_1, \ldots, B_p \in \mathbb{R}^{n \times d}$ and $S_1, \ldots, S_p \in \mathbb{R}^{n  \times m}$ be such that the $i$th row of $B_j$ and $S_j$ is given by $\mathbf{b}_j^{\T}(x_{ij})$ and $\mathbf{s}_j^{\T}(x_{ij})$, respectively. Let $\mathcal{M}$ be the block diagonal matrix given by 
$\mathcal{M} = \mbox{block-diag}(10^4, 
10^{4} \, \mathbb{I}_q,
 \Sigma_M / \lambda_1
\ldots,
 \Sigma_M / \lambda_p)$.
We also let $\mathbf{\widetilde{B}} = [\mathbbm{1}_n \mid \mathbf{Z} \mid B_1 \mid \ldots \mid B_p] \in \mathbb{R}^{n \times (1 + q + pd)}$, $\mathcal{G} = (\alpha, \eta^{\T}, \mathbf{\Gamma}_1^{\T}, \ldots, \mathbf{\Gamma}_p^{\T})^{\T}$, and for $1 \leq u < v \leq p$, we let $\mathbf{h}_{uv} = (S_u \theta_{uv,1})^2 (S_v \phi_{uv,1})^2 - (S_u \theta_{uv,2})^2 (S_v \phi_{uv,2})^2$ denote the vector of the $uv$th interaction evaluated at the data points, and $\mathbf{\Theta}_{uv} = (\mathbf{\Psi}_{uv}^{\T}, \tau_{uv,1}, \tau_{uv,2})^{\T}$ be the set of parameters for the $uv$-th interaction. For a matrix $M_0$, denote its trace by $\mbox{tr}(M_0)$. Posterior sampling then proceeds as follows.
\begin{enumerate}
    \item Sample $\mathcal{G} \mid - \sim N\left(\mathcal{A}^{-1} \dfrac{\mathbf{\widetilde{B}}^{\T}\xi}{\sigma^2}, \mathcal{A}^{-1}\right)$ using \cite{rue2001fast}, where $$\mathcal{A} = \dfrac{\mathbf{\widetilde{B}}^{\T} \mathbf{\widetilde{B}}}{\sigma^2}\ + \mathcal{M}^{-1}, \quad \mathbf{\xi} = \mathbf{y} -  \sum_{1 \leq u < v \leq p} \mathbf{h}_{uv}.$$
    \item For each $j=1,\ldots,p$, sample $\lambda_j \mid - \sim G\left(0.5 + \dfrac{d}{2}, 0.5 + \dfrac{\mathbf{\Gamma}_j^{\T} \Sigma_{M}^{-1} \mathbf{\Gamma}_j}{2}\right)$.
    \item For each $u, v$ with $1 \leq u < v \leq p$, let $\mathcal{I}_{uv} = \{(i_1,i_2) : 1 \leq i_1 < i_2 \leq p,  (i_1, i_2) \neq (u,v)\}$.
    \begin{enumerate}
        \item Define $\mathbf{\Delta}_{uv} = \mathbf{y} - \left(\mathbf{\widetilde{B}} \mathcal{G} + \displaystyle \sum_{(u',v') \in \mathcal{I}_{uv}} \mathbf{h}_{u' v'} \right)$. 
        \item Given $ \mathbf{\Delta}_{uv}$ and $\nu$, use HMC to draw one sample of $\mathbf{\Theta}_{uv}$, targeting 
        \begin{align*} \Pi(\mathbf{\Theta}_{uv} \mid  \mathbf{\Delta}_{uv}, \nu) \propto & \, \pi(\mathbf{\Psi}_{uv} \mid \tau_{uv,1}, \tau_{uv,2}, \nu) \,
        \pi_{\tau}(\tau_{uv,1}) \, \pi_{\tau}(\tau_{uv,2})  \, N(\mathbf{\Delta}_{uv} \mid \mathbf{h}_{uv}, \sigma^2 \mathbb{I}_n),
        \end{align*}
        where $\pi(\mathbf{\Psi}_{uv} \mid \tau_{uv,1}, \tau_{uv,2}, \nu)$ is as in Section \ref{sec:interaction-effects} and 
        $\pi_{\tau}(x) = 1 / \{\pi (1 + x^2)\} \mathbbm{1}\{x > 0\}$ is the $C^{+}(0,1)$ density.
        \item Update the $uv$-th interaction $\mathbf{h}_{uv} = (S_u \theta_{uv,1})^2 (S_v \phi_{uv,1})^2 - (S_u \theta_{uv,2})^2 (S_v \phi_{uv,2})^2 $.
    \end{enumerate}
    \item Following \cite{makalic2015simple}, we introduce $W$ such that $\nu^2 \mid W \sim \mbox{IG}(1/2, 1/W)$ and $W \sim \mbox{IG}(1/2,1)$, which leads to $\nu \sim C^{+}(0,1)$. 
    The full conditional updates for $\nu^2$ and $W$ are given by:
    \begin{enumerate}
        \item $\nu^2 \mid - \sim \mbox{IG}\left(\dfrac{1}{2} + 2m \dbinom{p}{2}, \dfrac{1}{W} + \dfrac{1}{2}\underset{1 \leq u < v \leq p}{\displaystyle\sum} r_{uv}\right)$, 
        where $$r_{uv} = \dfrac{\theta_{uv,1}^{\T} \Sigma_0^{-1} \theta_{uv,1} + \phi_{uv,1}^{\T} \Sigma_0^{-1} \phi_{uv,1}}{ \tau_{uv,1}^2} + \dfrac{\theta_{uv,2}^{\T} \Sigma_0^{-1} \theta_{uv,2} + \phi_{uv,2}^{\T} \Sigma_0^{-1} \phi_{uv,2}}{ \tau_{uv,2}^2}.$$
        \item $W \mid - \sim \mbox{IG}(1, 1 + \nu^{-2}).$
    \end{enumerate}
    \item Sample $\sigma^2 \mid - \sim \mbox{IG}\left(\dfrac{n}{2}, \dfrac{(\mathbf{y} - \mathbf{\mu})^{\T}(\mathbf{y} - \mathbf{\mu})}{2}\right),$ where $\mu = \mathbf{\widetilde{B}} \mathcal{G} + \displaystyle\sum_{1 \leq u < v \leq p} \mathbf{h}_{uv}.$
\end{enumerate}
We repeat Steps 1-5 for a large number of iterations, discard a burn-in, and base inference on the resulting samples. 

\section{Simulation Examples}
\label{sec:simulation}
 

\subsection{Preliminaries}
\label{sec:sim-prelim}

We carry out simulation studies comparing SAID with the competitors BKMR \citep{bobb2015bayesian}, MixSelect \citep{ferrari2020identifying}, HierNet \citep{bien2013lasso}, Family \citep{haris2016convex}, PIE \citep{wang2019penalized}, and RAMP \citep{hao2018model}. We generate data according to: 
\begin{equation}
\label{eq:model-simulation}
    y_i = H({\bf x}_i) + \epsilon_i,\quad \epsilon_i \sim N(0, \sigma_0^2),
\end{equation}
where $H$ is the exposure dose response surface decomposed as in \eqref{eq:our-model}, and $\sigma_0^2$ is the true error variance. As before, we assume that the exposures satisfy $\mathbf{x}_i \in [0,1]^p$. We consider two dimensions: $p = 2$ and $p = 10$. For $p = 2$, we compare the methods in terms of their estimation performance for varying interaction signal strength and error variance. For $p = 10$, we consider accuracy in estimation and variable selection for pairwise interactions. 

HierNet, Family, PIE, and RAMP estimate the dose response surface $H$ using quadratic regression, which also provides estimates of pairwise interaction functions using bilinear surfaces of the form $h_{uv}(x_u,x_v) = \gamma_{uv} x_u x_v$ for $1 \leq u < v \leq p$. MixSelect combines quadratic regression with an additional nonlinear deviation term in the model, constrained to be orthogonal to the quadratic regression. This approach provides estimates of pairwise interactions in the spirit of quadratic regression, while improving flexibility in capturing $H$. BKMR estimates $H$ using unconstrained Gaussian processes incorporating variable selection. Thus, we do not consider BKMR as a competitor when estimating pairwise interactions, as BKMR does not provide estimates of these interactions. 
Given training points $\{(\mathbf{x}_i, y_i)\}_{i=1}^{n}$, we first estimate $H$ and the pairwise interactions $h_{uv}$ using the relevant method and denote the estimates by $\widehat{H}$ and $\widehat{h}_{uv}$, respectively. For the Bayesian methods, we use the posterior mean as the estimator. We evaluate the true surface $H$ and true interactions $h_{uv}$ at test exposure points $\{\tilde{\mathbf{x}}_1, \ldots, \tilde{\mathbf{x}}_{n_t}\}$, denoted by $\widetilde{\mathbf{H}} = (H(\tilde{\mathbf{x}}_1), \ldots, H(\tilde{\mathbf{x}}_{n_t}))^{\T}$ and $\widetilde{\mathbf{h}}_{uv} = (h_{uv}(\tilde{x}_{1u}, \tilde{x}_{1v}), \ldots, h_{uv}(\tilde{x}_{n_t, u}, \tilde{x}_{n_t, v}))^{\T}$, respectively. Finally, we estimate $\widetilde{\mathbf{H}}$ and $\widetilde{\mathbf{h}}_{uv}$ for $1 \leq u < v \leq p$ by replacing $H$ and $h_{uv}$ with $\widehat{H}$ and $\widehat{h}_{uv}$, respectively. The estimates are evaluated using root mean squared error (RMSE). 

We also consider the {\em variable selection accuracy} of competing methods. For each method, we estimate the case $1$ and case $2$ errors from classifying an interaction as synergistic, antagonistic, or null. For any category, the case $1$ error probability is given by the probability of misclassifying an interaction as not belonging to that category, when in truth the interaction is in that category. The case $2$ error probability is given by the probability of misclassifying an interaction as belonging to that category when in truth, it does not belong to that category. For SAID, we classify $h_{uv}$ to the category with the highest posterior probability calculated as in Section \ref{sec:variable-selection}, with the categories being synergistic, antagonistic, null, and none of these three. For the frequentist quadratic regression approaches HierNet, Family, PIE, and RAMP, we classify $h_{uv}$ as synergistic, antagonistic, or null according to whether $\hat{\gamma}_{uv} > 0$, $\hat{\gamma}_{uv} < 0$, or $\hat{\gamma}_{uv} = 0$, respectively, where $\hat{h}_{uv}(x_u, x_v) = \hat{\gamma}_{uv} x_u x_v$ is the estimate of $h_{uv}$ obtained from the method. For the Bayesian quadratic regression approach MixSelect, we obtain posterior probabilities of $h_{uv}$ being synergistic, antagonistic, or null as the posterior probability that $\gamma_{uv} > 0$, $\gamma_{uv} < 0$ or $\gamma_{uv} = 0$, respectively, and classify $h_{uv}$ to the category with the highest posterior probability. 


\subsection{Two exposures}
\label{sec:p-2-simulations}


We first carry out simulation experiments with $p=2$ exposures. We consider four different scenarios and evaluate the competitors on test set RMSE in estimating $H$ and the pairwise interaction $h_{12}$. In all the scenarios, we train the methods on a sample of size $n=500$ and evaluate on $n_t = 500$ test points. The pairwise interaction $h_{12}$ is taken to be of the form $h_{12}(x_1, x_2) = \gamma_0 f(x_1, x_2)$, where $\gamma_0$ is the interaction strength and $f$ is a baseline interaction function which we vary across the four different scenarios.
We vary $\gamma_0 \in \{1,2\}$ and $\sigma_0^2 \in \{0.1,0.5\}$, thereby considering low-to-moderate interaction strength and error variance. 
For each method in a particular scenario and a pair $(\gamma_0, \sigma_0^2)$, we replicate the experiment $R = 100$ times and report the average error across replicates. When fitting SAID, we assume that the main effects satisfy the origin constraint.

We first consider a scenario where the true interaction is synergistic and nonlinear in nature and denote this case by SN. To elaborate, the dose response surface $H(\cdot)$ is given by
$$H(x_1, x_2) = 0.5 + x_1^2 + x_2^2 + \gamma_0 x_1^2 x_2^2.$$
The pairwise interaction $h_{12}(x_1, x_2)$ in this case is $h_{12}(x_1, x_2) = \gamma_0 x_1^2 x_2^2$. 
The RMSEs of each method for varying signal and noise components are provided in Tables \ref{tab:p_2_case1_whole} and \ref{tab:p_2_case1_int}. 
Although methods such as BKMR and RAMP have similar out-of-sample RMSEs with respect to SAID, the proposed approach shows superior performance in estimating the interaction term. In particular, the gains of using SAID are the most prominent when the signal strength of the interaction is the highest, indicating lack of flexibility of the other methods. 
\begin{table}
\centering
\begin{tabular}{|l|l|l|l|l|l|l|l|}
\hline
    Signal and Noise   & SAID  & BKMR & MixSelect & HierNet & Family & PIE  & RAMP \\ \hline
$ \gamma_0 = 1, \sigma_0^2 = 0.1$ & 0.05 & 0.05 &   0.15        & 0.22   & 0.41   & 1.35 &   0.05   \\ \hline
$\gamma_0 = 1, \sigma_0^2 = 0.5$ & 0.10 & 0.11 &   0.16        & 0.22    & 0.41   & 1.07 &  0.10 \\ \hline
$\gamma_0 = 2, \sigma_0^2 = 0.1$ & 0.05 & 0.05 &    0.18       & 0.25    & 0.52   & 2.10 & 0.07     \\ \hline
$\gamma_0 = 2, \sigma_0^2 = 0.5$ & 0.10 & 0.11 &  0.19         & 0.25    & 0.52   & 1.99 & 0.10  \\ \hline
\end{tabular}
\caption{RMSE of competitors when estimating H in case SN. Here $n = 500$ and $p = 2$. Lower values are better.}
\label{tab:p_2_case1_whole}
\end{table}
\begin{table}
\centering
\begin{tabular}{|l|l|l|l|l|l|l|l|}
\hline
  Signal and Noise    & SAID   & MixSelect & HierNet & Family & PIE  & RAMP \\ \hline
$\gamma_0 = 1, \sigma_0^2 = 0.1$ & 0.06 &   0.14        &  0.19   & 0.18  & 0.15  &    0.15  \\ \hline
$\gamma_0 = 1, \sigma_0^2 = 0.5$ & 0.12 &   0.14        & 0.18    & 0.18   & 0.21 &  0.19    \\ \hline
$\gamma_0 = 2, \sigma_0^2 = 0.1$ & 0.07  &    0.27       & 0.20    & 0.37   & 0.31 & 0.31     \\ \hline
$\gamma_0 = 2, \sigma_0^2 = 0.5$ & 0.14 &    0.25       & 0.22    & 0.37   & 0.31 & 0.31  \\ \hline
\end{tabular}
\caption{RMSE of competitors when estimating $h_{12}$ in case SN. Here $n = 500$ and $p = 2$. Lower values are better.}
\label{tab:p_2_case1_int}
\end{table}

Secondly, we assume the data generating process has linear main effects and a synergistic linear interaction, namely a quadratic regression (QR) setup. We let $H$ be
\begin{equation*}
    H(x_1, x_2) = 0.5 + x_1 + x_2 + \gamma_0 x_1 x_2.
\end{equation*}
The interaction is given by $h_{12}(x_1, x_2) = \gamma_0 x_1 x_2$. We next consider a scenario where the interaction is nonlinear and is neither synergistic, antagonistic, or null, while keeping the main effects as before. We call this the mis-specified interaction (MIS) case and let $H$ be
\begin{equation*}
    H(x_1, x_2) = 0.5 + x_1 + x_2 + \gamma_0 (x_1 x_2 - 2x_1^2 x_2^2).
\end{equation*}
In this case, the interaction is given by 
$h_{12}(x_1,x_2) = \gamma_0 (x_1 x_2 - 2 x_1^2 x_2^2).$
The interaction $h_{12}$ is non-negative for $x_1 x_2 \leq 1/2$ and non-positive for $x_1 x_2 \geq 1/2$, and thus is neither synergistic, antagonistic, or null.
To ease exposition, we defer the results obtained from scenarios QR and MIS to 
Section \ref{supp:p_2_table} of the  Supplementary Material. For scenario MIS, we also provide the proportion of replicates for which $h_{12}$ was misclassified as null for all the methods considered. In the scenario QR, quadratic regression approaches such as RAMP and MixSelect perform better as the data are generated from a quadratic regression. However, the performance of SAID is similar to its competitors for estimating both the dose response surface and the interaction surface. 
In the scenario MIS, BKMR and SAID perform the best in terms of estimating $H$. In terms of variable selection, performance is comparable between the methods, with MixSelect performing the best in the lowest SNR case. SAID outperforms the competitors by a clear margin in terms of estimating the interaction effects due to the lack of flexibility for other pairwise interaction approaches.

Finally, we also consider a simulation example where it is not immediate that the interaction $h_{12}$ can be decomposed as $h_{12}(x_1, x_2) = P_1(x_1) P_2(x_2) - N_1(x_1) N_2(x_2)$ for some non-negative functions $P_1, P_2, N_1, N_2$, with $x_1, x_2 \in [0,1].$ Although the proposed approach is not tailored towards detecting such interactions, it may be easily extended to incorporate tensor product splines as the basis functions with appropriately chosen prior distributions for the coefficients. For further details on such a generalized SAID (G-SAID), we refer the reader to Section \ref{supp:other} of the Supplementary Material. We let $h_{12} = \gamma_0 \sin^2(2 \pi x_1 x_2)$ for $x_1, x_2 \in [0,1]$, 
and denote this simulation example as SINE. We assume that the true intercept and both of the true main effects are $0$, so that the dose response function $H(\mathbf{x}) = h_{12}(x_1,x_2)$ in this case. The results are deferred to Table \ref{tab:p_2_sin_whole} of the Supplementary Material. Due to misspecification, SAID performs worse than BKMR in most of the cases; however, with low SNR ($\approx 0.06$) for the case $\gamma_0 = 1$ and $\sigma_0^2 = 0.5$, SAID outperforms the overly flexible BKMR, indicating superior performance of SAID in detecting interactions in the presence of noise. This is notable as most studies regarding health effects of environmental mixtures have low SNR. G-SAID performs the best throughout all the cases. 

\subsection{More than two exposures}
\label{sec:multi-exposures-simulations}


In this subsection, we consider a moderate dimensional example with $p = 10$ and thus $\binom{10}{2} = 45$ pairwise interactions. We assume that the true data generating model has $5$ synergistic, $5$ antagonistic, and $35$ null pairwise interactions. We vary $\sigma_0^2 \in \{0.2, 0.5\}$ with the number of training and test points taken to be $n = 1000$ and $n_t = 500$, respectively. The experiment is replicated $R = 100$ times for each choice of $\sigma_0^2$.
We fit all the quadratic regression methods assuming weak heredity of the pairwise interactions. Family is not considered due to unstable estimates; furthermore, except when estimating $H$, we do not consider BKMR.

We assume \eqref{eq:model-simulation} and decompose $H({\bf x})$ as
$$H({\bf x}) = \alpha_0 + M_0({\bf x}) + S_0({\bf x}) + A_0({\bf x}),$$
where
\begin{align*}
\alpha_0 &= 
-5/6\\
    M_0({\bf x}) &= \left(x_1 + x_1^2\right) + \dfrac{x_2}{2} + x_7^3,\\
    S_0({\bf x}) &= 4(x_1 - x_1^2)x_2 + x_1 x_9 + x_2^2 x_3^2 + x_3 x_8 + \dfrac{(e^{x_5} - 1) x_{10}}{e - 1},\\
    A_0({\bf x}) &= -\left[x_1 x_3 + x_2^2 x_5 + \dfrac{27}{4}x_4^2(1-x_4)x_9 + x_7 x_{10} + x_8 x_9^2\right].
\end{align*}
Here, $M_0$, $S_0$, $A_0$ denote the true main effects, synergistic interaction effects, and the antagonistic interaction effects, respectively. The form of the interactions include pairwise linear, nonlinear polynomial, and nonlinear interactions which are not of polynomial form. Each pairwise interaction has absolute maximum value equal to $1$. The signal-to-noise ratios (SNRs) for estimating the overall surface $H({\bf x})$ and the interaction surface $I(\mathbf{x}) = S_0(\mathbf{x}) + A_0(\mathbf{x})$ are $2.90$ and $1.03$, respectively, when $\sigma_0^2 = 0.2$, and $1.16$ and $0.41$, respectively, when $\sigma_0^2 = 0.5$. As before, we assume the main effects start from the origin when fitting SAID. The results comparing the methods are given in Table \ref{tab:multi-exposure-simulations} for $\sigma_0^2 = 0.2$ and Table \ref{tab:multi-exposure-simulations-case2} for $\sigma_0^2 = 0.5$. For comparison, the RMSE when estimating $H$ using BKMR is 0.17 when $\sigma_0^2 = 0.2$ and $0.25$ when $\sigma_0^2 = 0.5$.

 In terms of estimation accuracy, SAID performs uniformly better than its competitors in estimating both the overall surface $H$ and the interaction surface $I$. We believe that this is due to the SAID prior on the pairwise interactions being flexible enough to capture nonlinear interactions while also efficiently extracting interaction signal in the presence of noise. Furthermore, the SAID framework does not require any heredity assumptions, helping detection of pairwise interactions even in the absence of main effects of one or both of the corresponding exposures. In terms of variable selection, we have reported all probabilities rounded up to the second place of decimal for ease of exposition. The case 1 error probabilities for synergistic and antagonistic interactions and case 2 error probabilities for detecting null interactions are similar across the methods. However, SAID has considerably lower case 1 error when detecting null interactions, indicating that SAID more accurately classifies null interactions to be null compared with its competitors. SAID also shows superior performance in terms of case 2 error when classifying both synergistic and antagonistic interactions. This indicates that SAID has a lower probability of misclassifying an interaction as synergistic or antagonistic when it is not so, compared with its competitors. For the case with $\sigma_0^2 = 0.2$, out of the methods considered, SAID is the only one with case 1 and case 2 classification errors less than $0.05$ in each scenario. For the low SNR case with $\sigma_0^2 = 0.5$, SAID outperforms other approaches in estimation error while stacking up similarly to the other methods as in the $\sigma_0^2 = 0.2$ case.    
 When considering   uncertainty quantification of interactions evaluated at the test points, the $95\%$ posterior credible intervals obtained from SAID have $\sim 95 \%$ coverage, averaged over $R = 100$ replicates and all $45$ interactions, for both $\sigma_0^2 = 0.2$ and $\sigma_0^2 = 0.5$.

\begin{table}[]
\centering
\begin{tabular}{|l|l|l|l|l|l|l|}
\hline
            & SAID  & MixSelect & HierNet & PIE  & RAMP \\ \hline
Overall Surface RMSE     & 0.14 &   0.27     & 0.31    & 1.60 & 0.18 \\ \hline
Interaction Surface RMSE & 0.21 &    0.68    & 0.47   & 0.44 & 0.44 \\ \hline
Synergistic Case 1 Probability  & 0.01 &    0.03      & 0.01    & 0.01 & 0.01 \\ \hline
Synergistic Case2 Probability   & 0.01 &    0.26             & 0.46    & 0.21 & 0.24 \\ \hline
Antagonistic Case1 Probability   & 0.01 &     0.01           & 0.01    & 0.01 & 0.01 \\ \hline
Antagonistic Case2 Probability   & 0.01 &      0.05           & 0.26    & 0.04 & 0.07 \\ \hline
Null Case1 Probability  & 0.01 &      0.15          & 0.36    & 0.12 & 0.16 \\ \hline
Null Case2 Probability  & 0.01 &      0.04           & 0.01    & 0.02 & 0.03 \\ \hline
\end{tabular}
\caption{Comparison of the methods in terms of RMSE and variable selection accuracy for $p=10$ when $\sigma_0^2 = 0.2$. Lower values are better.
}
\label{tab:multi-exposure-simulations}
\end{table}

\begin{table}[]
\centering
\begin{tabular}{|l|l|l|l|l|l|l|}
\hline
            & SAID  & MixSelect & HierNet & PIE  & RAMP \\ \hline
Overall Surface RMSE     & 0.22 &   0.29     & 0.31    & 1.72 & 0.25 \\ \hline
Interaction Surface RMSE & 0.34 &    0.73    & 0.48   & 0.59 & 0.55 \\ \hline
Synergistic Case 1 Probability  & 0.03 &    0.03      & 0.01    & 0.01 & 0.01 \\ \hline
Synergistic Case2 Probability   & 0.12 &    0.40             & 0.53    & 0.36 & 0.51 \\ \hline
Antagonistic Case1 Probability   & 0.01 &     0.01           & 0.01    & 0.01 & 0.01 \\ \hline
Antagonistic Case2 Probability   & 0.21 &      0.32           & 0.33    & 0.25 & 0.43 \\ \hline
Null Case1 Probability  & 0.16 &      0.36          & 0.43    & 0.30 & 0.47 \\ \hline
Null Case2 Probability  & 0.04 &      0.05           & 0.01    & 0.02 & 0.02 \\ \hline
\end{tabular}
\caption{Comparison of the methods in terms of RMSE and variable selection accuracy for $p=10$ when $\sigma_0^2 = 0.5$. Lower values are better.
}
\label{tab:multi-exposure-simulations-case2}
\end{table}

\section{Analysis of Kidney Function Data}
\label{sec:SIMapplication}

\subsection{Preliminaries}
\label{sec:appprelim}

In this Section, we apply the proposed Synergistic Antagonistic Interaction Detection (SAID) approach to the NHANES 2015-16 data. We are interested in detecting synergistic and antagonistic interactions between heavy metals affecting kidney function. Following Section \ref{sec:application} and the results in Section \ref{sec:simulation}, it is evident that the current methods either do not provide inferences on synergistic or antagonistic interactions or lack flexibility in characterizing such interactions. We now illustrate how SAID detects synergistic, antagonistic, and null interactions. 

As discussed in Section \ref{sec:motivation}, we assess kidney function of individuals through their urine creatinine (uCr) levels, measured in mg/dL. Heavy metal concentrations are also measured in urine. We consider $13$ heavy metals, namely Antimony (Sb), Barium (Ba),  Cadmium (Cd), Cesium (Cs), Cobalt (Co), Lead (Pb), Manganese (Mn),  Molybdenum (Mo), Strontium (Sr), Thallium (Tl), Tin (Sn), Tungsten (W), and Uranium (U), all measured in $\mu$g/mL. Following Section \ref{sec:data-description}, we remove individuals with albuminuria and missing entries from the original data set. The sample size after removal of such entries is $n = 1979$. As described in Section \ref{sec:urine-dilution}, we multiply both the uCr and the heavy metal concentration levels by the individual-specific urine flow rate to obtain their dilution-adjusted versions. Furthermore, we also consider age (in years), sex (0 for males and 1 for females), ethnicity (Non-Hispanic White, Non-Hispanic Black, Mexican American, Other Hispanic, and Other), and body mass index (BMI) of the subject as covariates.

Before beginning our analysis, we (natural) log-transformed the dilution-adjusted urine creatinine levels. We use a marginal cumulative distribution function (CDF) transformation to 
obtained standardized exposure variables $x_j = \hat{F}_j(E_j) \in [0,1]$, with $F_j$ the marginal CDF of the $j$th exposure $E_j$ for $j=1,\ldots,p$. The estimated CDF $\hat{F}_j$ is obtained from a kernel density estimate of the marginal density of $E_j$.
Standardizing exposures in this manner facilitates statistical inferences by reducing the tendency for the exposure data to be unevenly distributed, with very sparse observations for certain ranges of exposure. Transforming the exposures does not complicate interpretation, as we can convert dose response surfaces back to the original units. In addition, the transformed exposures are directly interpretable as quantiles of the exposure distribution in the sample; for example, $x_j=0.5$ reflects a median value for the $j$th exposure. We also tried other transformations of the raw exposures, such as $g(E) = E / (1 + E)$ for $E \geq 0$, but found the quantile transform to provide slightly better consistency of SAID results across simulation replications,
while facilitating interpretation.

Transforming the exposure data does not change the nature of the detected interactions. To see this, suppose the estimated interaction between the transformed exposures $x_u$ and $x_v$ is $h_{uv}(x_u, x_v)$. In terms of the original exposures, we can rewrite this interaction as $h_{uv}(x_u, x_v) = h_{uv}(\hat{F}_u(E_u), \hat{F}_v(E_v)) = h_{uv}^{*}(E_u, E_v).$ Thus, an estimated synergistic interaction $h_{uv}$ between transformed exposures $x_u$ and $x_v$ implies a synergistic interaction $h_{uv}^{*}$ between the raw exposures $E_u$ and $E_v$. Our reasoning analogously extends to antagonistic or null interactions. In our analyses, we report the main effects and interaction effects in terms of the raw exposures.

For the $i$-th individual in the data set, let $y_i \in \mathbb{R}$ be the log of the dilution-adjusted urinary creatinine level, $\mathbf{x}_i \in [0,1]^{13}$ be the $13$ dilution-adjusted urinary metal concentrations after CDF transformations,  and $\mathbf{z}_i \in \mathbb{R}^7$ be the covariate vector, with dummy variables for
the different ethnic groups. We use ``Non-Hispanic White'' as the baseline category in defining indicators. The dimensions of exposures and covariates are $p = 13$ and $q = 7$, respectively. We also standardized the covariates age and BMI to have variance $1$ before fitting the model.


Following Section \ref{sec:posterior-sampling}, we employed a Hamiltonian Monte Carlo (HMC)-within-Gibbs sampler to obtain posterior samples of the model parameters. We ran the sampler for a total of $25000$ iterations and discarded the first $10000$ iterations as burn-in. We standardized $y$ to have variance $1$ before fitting the model. 
To compute the PIPs of the interactions, we employ the S2M approach as described in Section \ref{sec:variable-selection}. This approach detected 4 interactions with PIPs greater than $0.75$. We present our inferences in the original scale, that is, with $y$ not standardized. This is achieved simply by multiplying the estimates of the intercept, main effects, interaction effects, covariate effects, and measurement error standard deviation obtained from the fitted model by the standard deviation of $y$. 

\subsection{Model Diagnostics}
\label{sec:appdiag}

We first assess MCMC convergence. To improve mixing, we perturb the step-size $e_0$ by a small factor every $500$ iterations. As a measure of mixing of the chain, we look at the MCMC samples of the error variance $\sigma^2$ and the intercept $\alpha$. The effective sample sizes of $\sigma^2$ and $\alpha$ are approximately $15 \%$ and $20 \%$ of the total number of MCMC samples, respectively, after discarding the burn-in. We found this proportion to remain fairly robust across longer chains. To assess overall convergence of our algorithm, we look at the trace plots of $\sigma^2$ and $\alpha$, which indicated satisfactory convergence. To inspect convergence of the interactions $h_{uv}$, we use the diagnostic described in \cite{geweke1991evaluating} on the MCMC samples of $\int h_{uv}^{+}$ and $\int h_{uv}^{-}$ for $1 \leq u < v \leq p$, and obtained convergence for all the interactions. On a MacBook Pro with M1 Pro CPU and 32 GB of RAM, it took $\sim 30$ minutes to fit SAID on this dataset; for comparison, MixSelect took $\sim 7$ hours with the same number of MCMC iterates. 

Next, we assess goodness-of-fit by inspecting standardized residuals and carrying out posterior predictive checks \citep{gelman1996posterior}. We generate posterior predictive samples corresponding to each training point. To validate the assumption of normality of the measurement error, we look at a Q-Q plot of standardized residuals, defined to be the difference of the observed and predicted responses and then standardized.  We also compare the marginal density of the observed response variable with the marginal densities of $3$ randomly chosen MCMC samples of predicted responses. We provide both the Q-Q plot of the obtained standardized residuals and the comparison of marginal density plots in Figure \ref{fig:gofit}. Figure \ref{fig:gofit}(a) indicates that the normality assumption is justified; furthermore, the densities of the randomly chosen predicted response samples in Figure \ref{fig:gofit}(b) very closely resemble the marginal density of the observed response. Lastly, the coverage of $95\%$ posterior predictive intervals of the responses, averaged over all observed responses, is approximately $96 \%$.

\begin{figure}
\begin{tabular}{cc}
  \includegraphics[width=65mm]{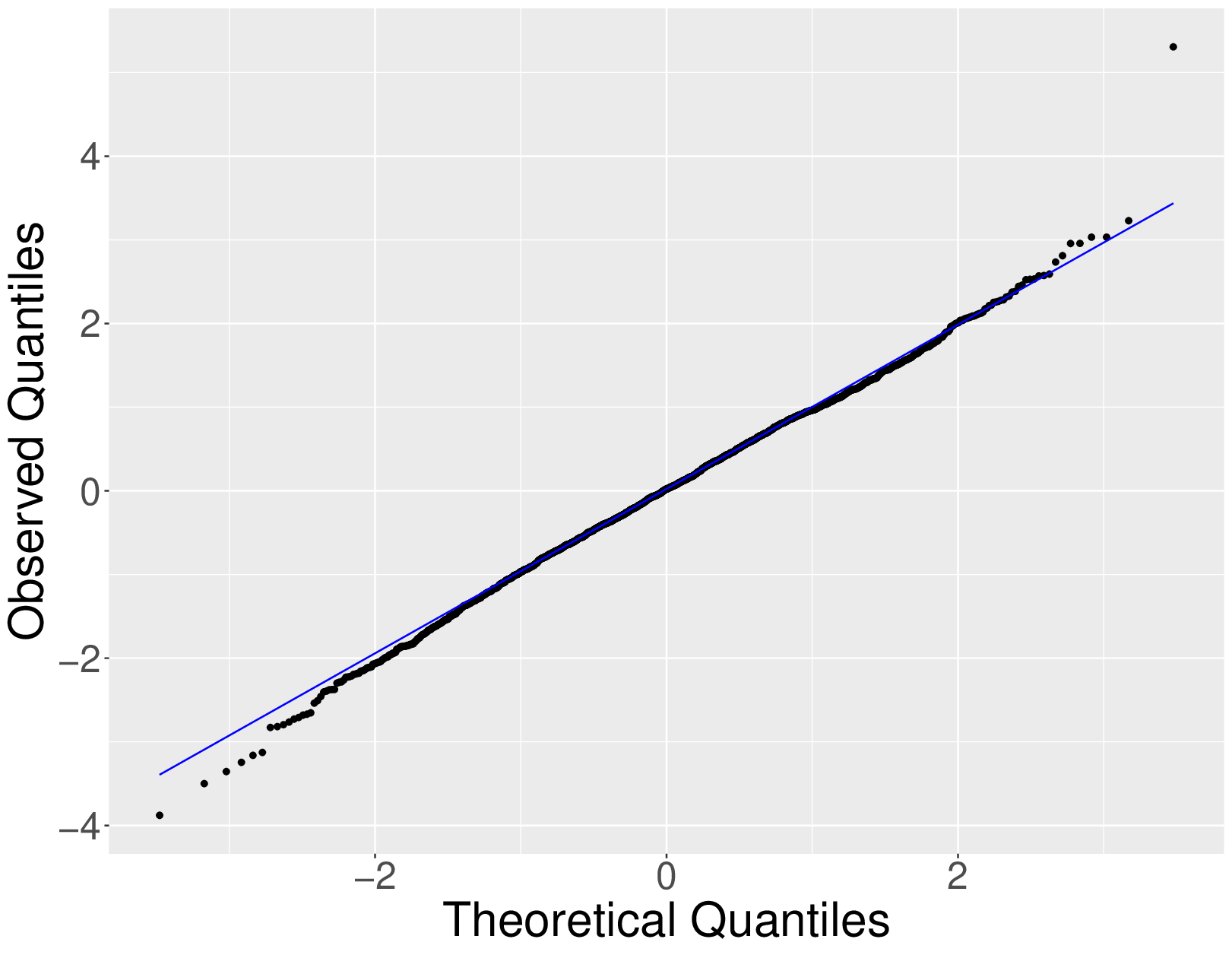} &   \includegraphics[width=65mm]{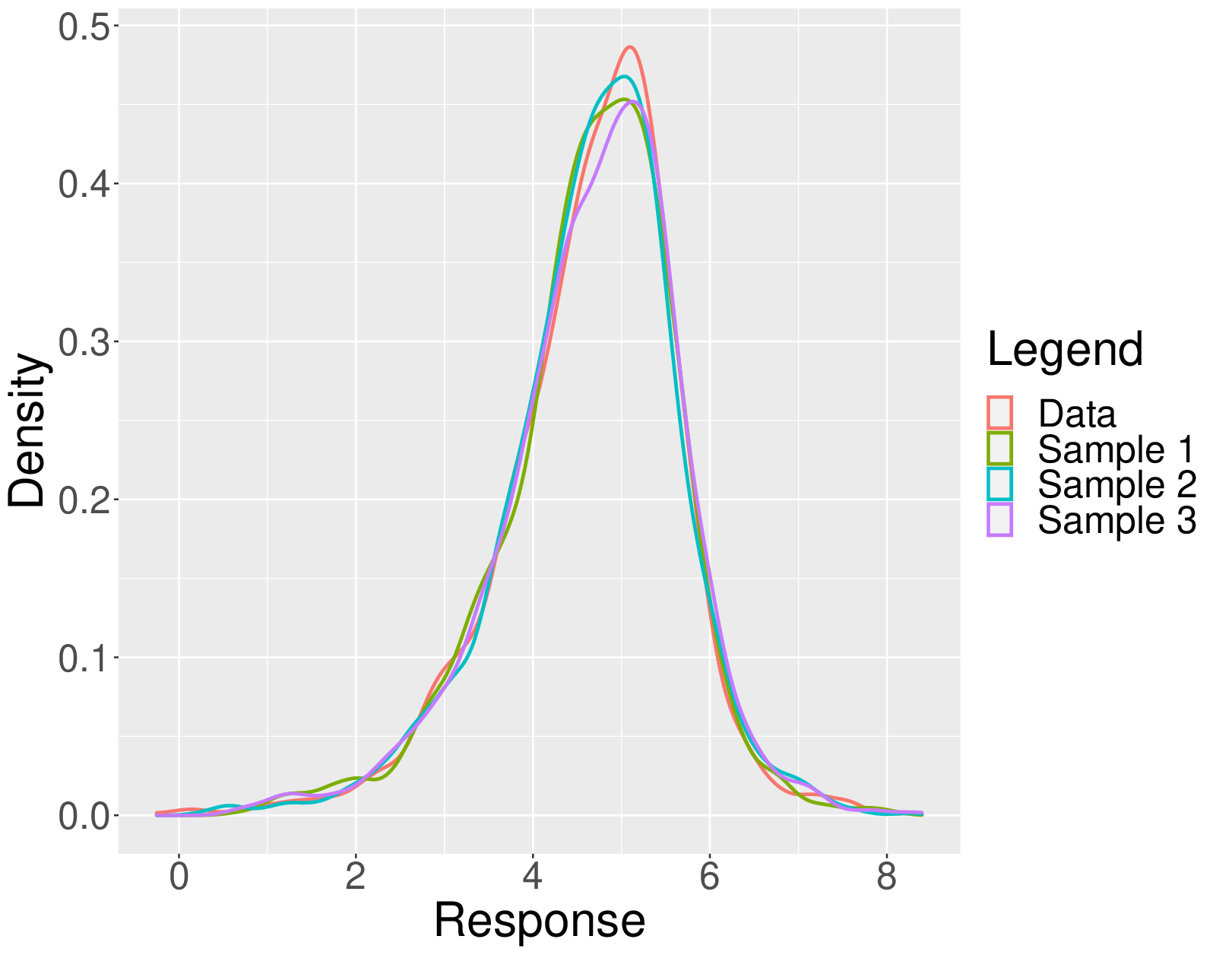} \\
(a) Q-Q Plot of standardized residuals. & (b) Marginal density plot of generated responses. \\[6pt]
\end{tabular}
\caption{Figure showing Q-Q plot of standardized residuals and marginal density plots of posterior predictive samples of urine creatinine levels, obtained from the NHANES 2015-16 data. Deviation from the reference blue line in Q-Q plot is deviation from normality. }
\label{fig:gofit}
\end{figure}



\subsection{Results}
\label{sec:appresults}

In this subsection, we discuss the main results of our analyses of the data described in Section \ref{sec:appprelim}. For each exposure, we constrain the main effects to start from $0$ at the minimum observed value of that exposure. 
The $95 \%$ posterior credible interval of the error variance $\sigma^2$ is $[0.080, 0.091]$ with a posterior mean of $0.086$. The exposures and covariates together explain approximately $89 \%$ of the variation in the response.

We observed nonlinear main effects for Antimony, Cadmium, Cobalt, Cesium, Molybdenum, Strontium, Tin, and Uranium. The general trend for the main effects of these heavy metals is an increase in log dilution-adjusted urine creatinine as exposure to metal concentrations increases. Since urine concentrations of creatinine are directly correlated with serum concentrations of creatinine, this might indicate higher kidney stress at higher levels of metal exposure. The main effects usually exhibit a monotonic or hill-shaped pattern, which is as expected in studying health effects of potentially toxic exposures.
We found exposure to high doses of Cesium to increase log dilution-adjusted urine creatinine the most, followed by Cadmium.
For illustration, we plot the main effects of Cadmium, Cesium, Molybdenum, and Uranium in Figure \ref{fig:me-plots}, with both the response and the exposures shown in log scale. We provide further plots for the main effects of the other exposures in Section \ref{supp:application} of the Supplementary Material.
When the exposures are CDF transformed, we can evaluate the main effect of an exposure at a desired quantile. As an illustration, the metals Antimony, Cadmium, Cobalt, Cesium, Molybdenum, Strontium, Tin, and Uranium at their median exposure levels have a main effect of $0.17, 0.76, 0.37, 0.92, 0.38$, $0.14$, $0.08$, and $0.25$, respectively, on log dilution-adjusted Creatinine. 
Similar main effects of heavy metal exposures on kidney function have been detected in earlier literature. 
For example, 
\cite{ferraro2010low} found Cadmium exposure to be associated with increased risk of chronic kidney disease, based on an analysis of NHANES data from 1999-2006. In a recent study, \cite{rahman2022association} found Cesium, Cadmium, and Antimony to be associated with kidney damage.  

\begin{figure}
\centering
\begin{subfigure}{20em}
    \centering
    \includegraphics[width=20em]{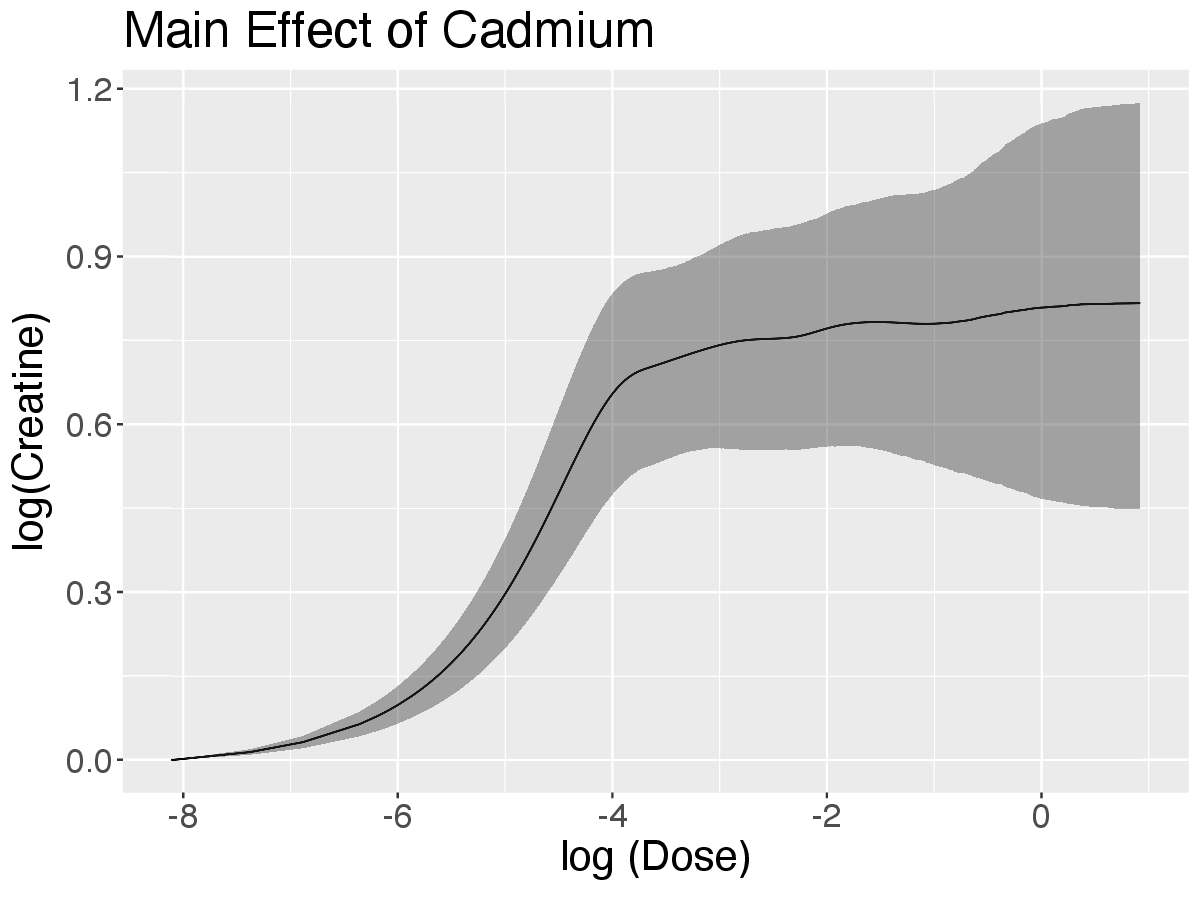}
\end{subfigure}%
\begin{subfigure}{20em}
    \centering
    \includegraphics[width=20em]{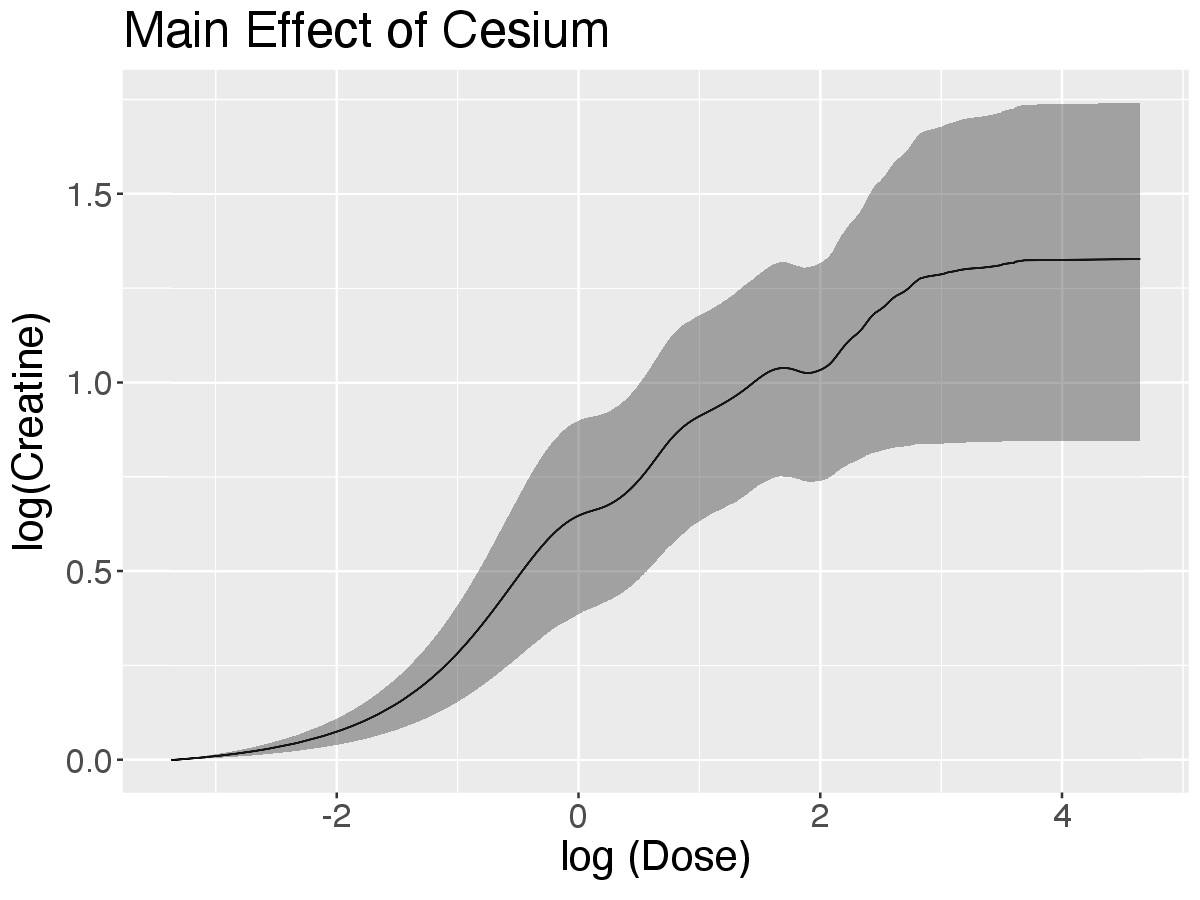}
\end{subfigure}
\begin{subfigure}{20em}
    \centering
    \includegraphics[width=20em]{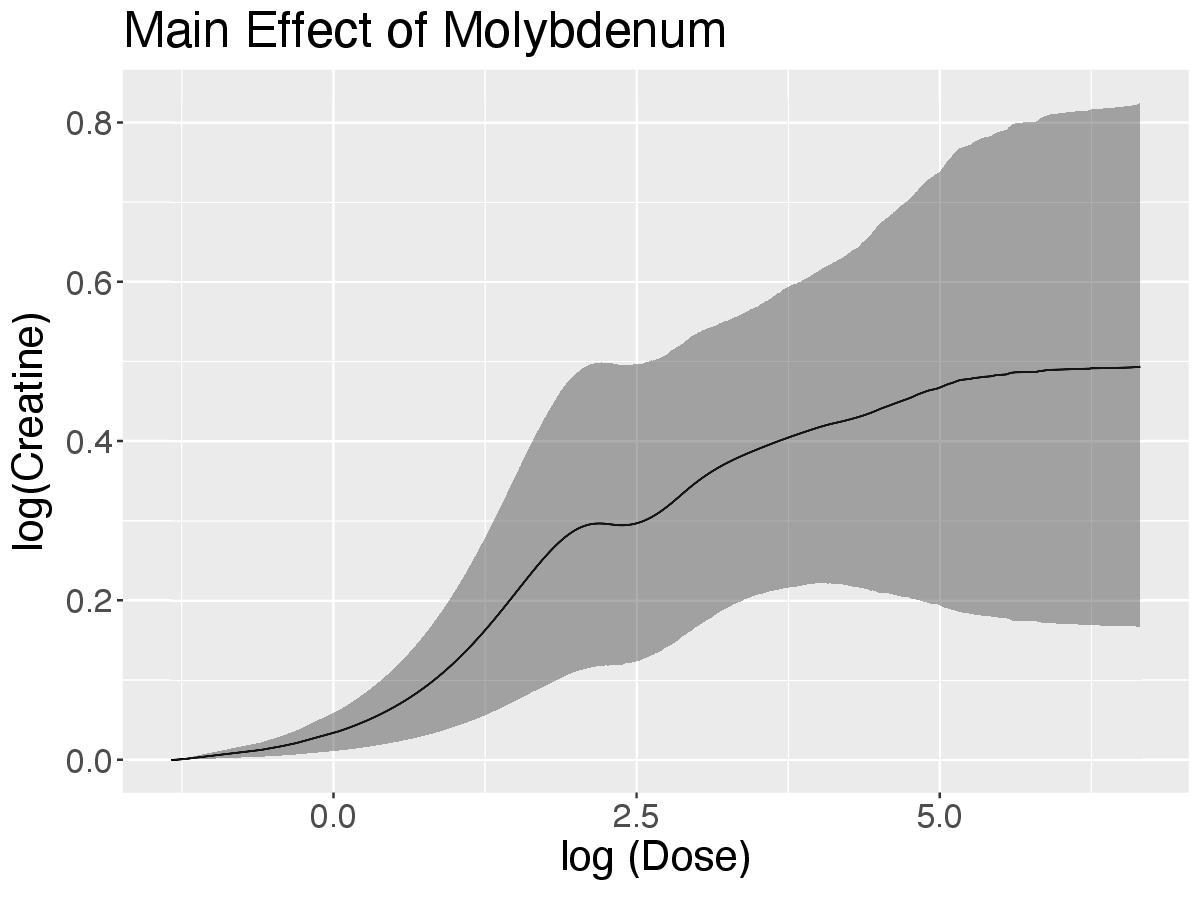}
\end{subfigure}%
\begin{subfigure}{20em}
    \centering
    \includegraphics[width=20em]{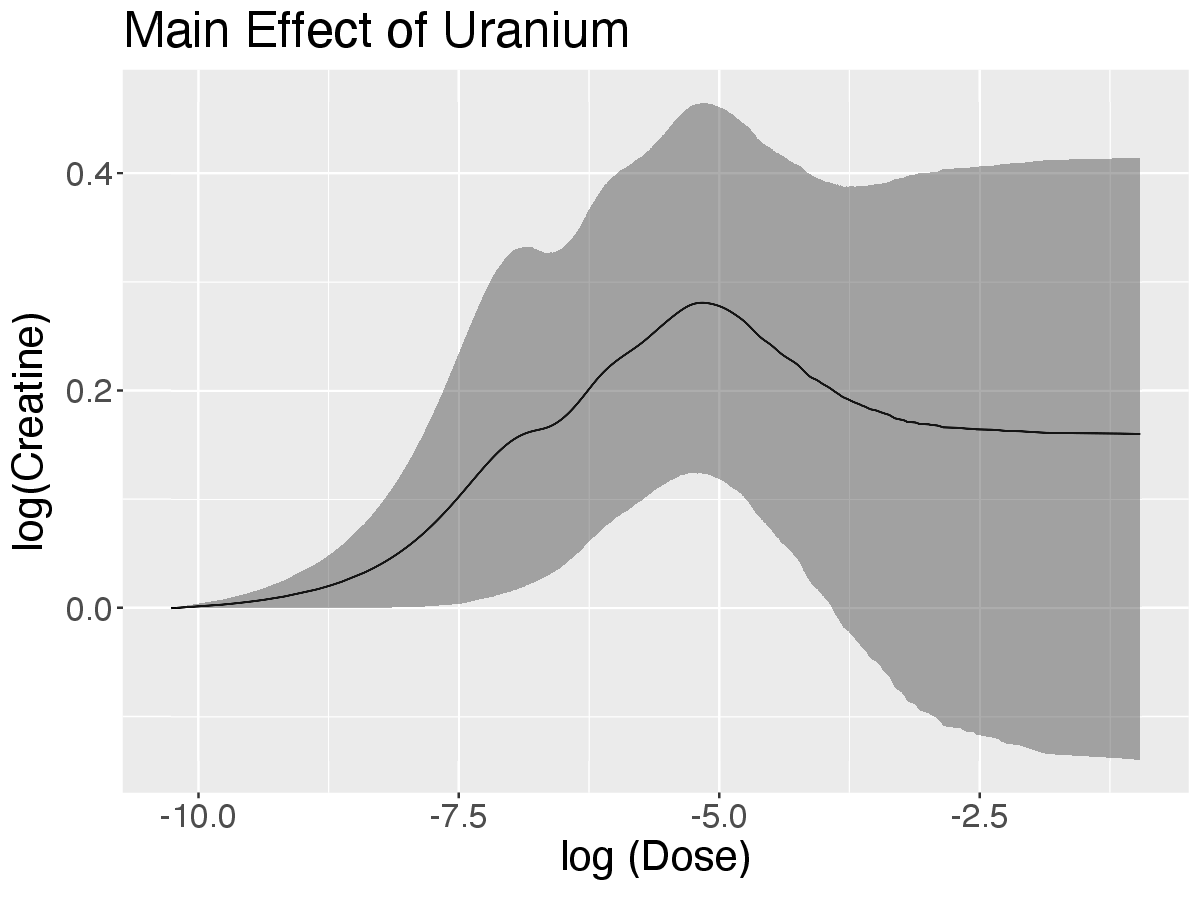}
\end{subfigure}
\caption{Plots showing the main effects of dilution-adjusted Cadmium, Cesium, Molybdenum, and Uranium on log dilution-adjusted Creatinine. Exposure levels are in log scale. Black line denotes posterior mean and shaded regions denote pointwise $95 \%$ posterior credible intervals. }
\label{fig:me-plots}
\end{figure}

The proposed approach detects multiple synergistic and antagonistic interactions between exposures. Excluding the interactions with $\mbox{PIP} < 0.75$, the detected interactions were Cadmium and Tin (PIP > 0.999), Cadmium and Manganese (PIP = 0.93), Cobalt and Manganese (PIP = 0.89), and Antimony and Uranium (PIP = 0.84).
For these interactions, we also compute the posterior synergistic probability (PSP) and posterior antagonistic probability (PAP).
The interaction between Cadmium and Tin had PSP > 0.99, indicating a very high posterior probability of being synergistic. The interactions between Cobalt and Manganese and Antimony and Uranium each had PSP $\sim 0.70$, indicating a moderate synergistic interaction. On the other hand, the interaction between Cadmium and Manganese is mildly antagonistic, with PAP = 0.46.
The interactions typically demonstrate flat behavior for most of the exposure domain and synergy/antagonism for the rest of the domain. Nonlinear surfaces of this kind cannot be captured by quadratic regression approaches due to their inflexibility. We believe this to be a possible reason behind MixSelect being unable to detect these interactions. 
In Figure \ref{fig:ie-plots}, we plot the synergistic interaction surfaces of Cadmium and Tin, Cobalt and Manganese, and Antimony and Uranium. The plot for the interaction between Cadmium and Manganese can be found in Section \ref{supp:application} of the  Supplementary Material. Each row in Figure \ref{fig:ie-plots}, from left to right, shows the pointwise $2.5 \%$ quantile, mean, and pointwise $97.5\%$ quantile of the posterior samples of the interaction, evaluated on a $30 \times 30$ regular grid of points across the exposure values. 

Regarding the covariate effects, we detected a negative association with age with an estimated coefficient $-0.004$ and $95\%$ CI given by $[-0.003, -0.005]$. As expected, females had significantly lower urine creatinine levels than males, with the estimated coefficient of sex being $-0.25$ with $95 \%$ CI given by $[-0.28, -0.22]$. In addition, average creatinine levels increased with BMI, with the coefficient on BMI estimated as $0.015$ with $95\%$ CI of $[0.013, 0.017]$. The ethnicity category ``Non-Hispanic Black'' was found to have higher log dilution-adjusted urine creatinine concentrations than that of the baseline category ``Non-Hispanic White", with an estimated coefficient of $0.17$ and a $95\%$ CI $[0.13, 0.21]$. The ethnicity categories ``Mexican American'', ``Other Hispanic'', and ``Other'' were found to have lower log dilution-adjusted urine creatinine concentrations than that of ``Non-Hispanic White''; their estimated coefficients are $-0.12$ with $95\%$ CI given by $[-0.16, -0.08]$, $-0.05$ with $95\%$ CI given by $[-0.10, -0.01]$, and $-0.12$ with a $95\%$ CI given by $[-0.16, -0.07]$, respectively. Similar observations have been made previously in literature; refer to \cite{james1988longitudinal}. 

\begin{figure}
\centering
\begin{subfigure}{35em}
    \centering
    \includegraphics[width=35em]{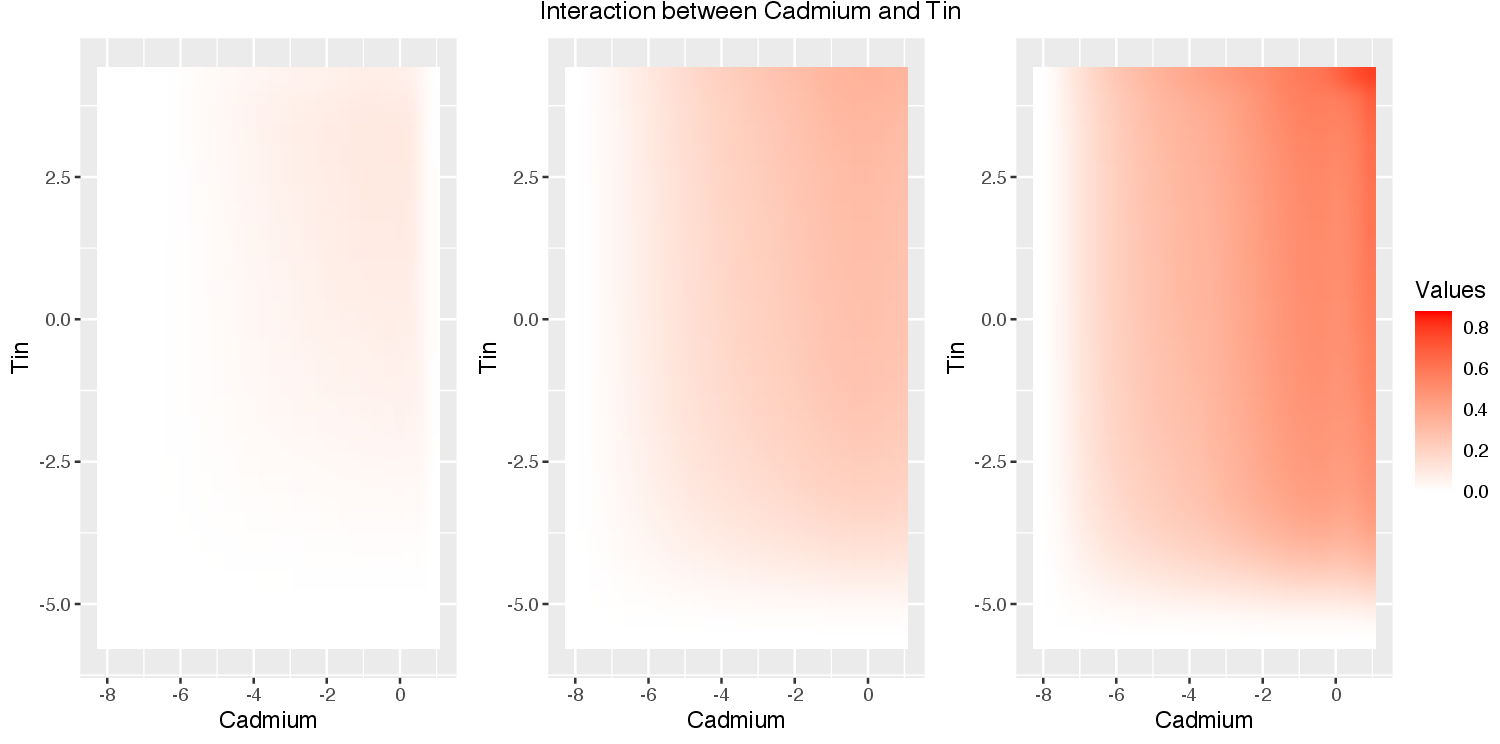}
\end{subfigure}


\begin{subfigure}{35em}
    \centering
    \includegraphics[width=35em]{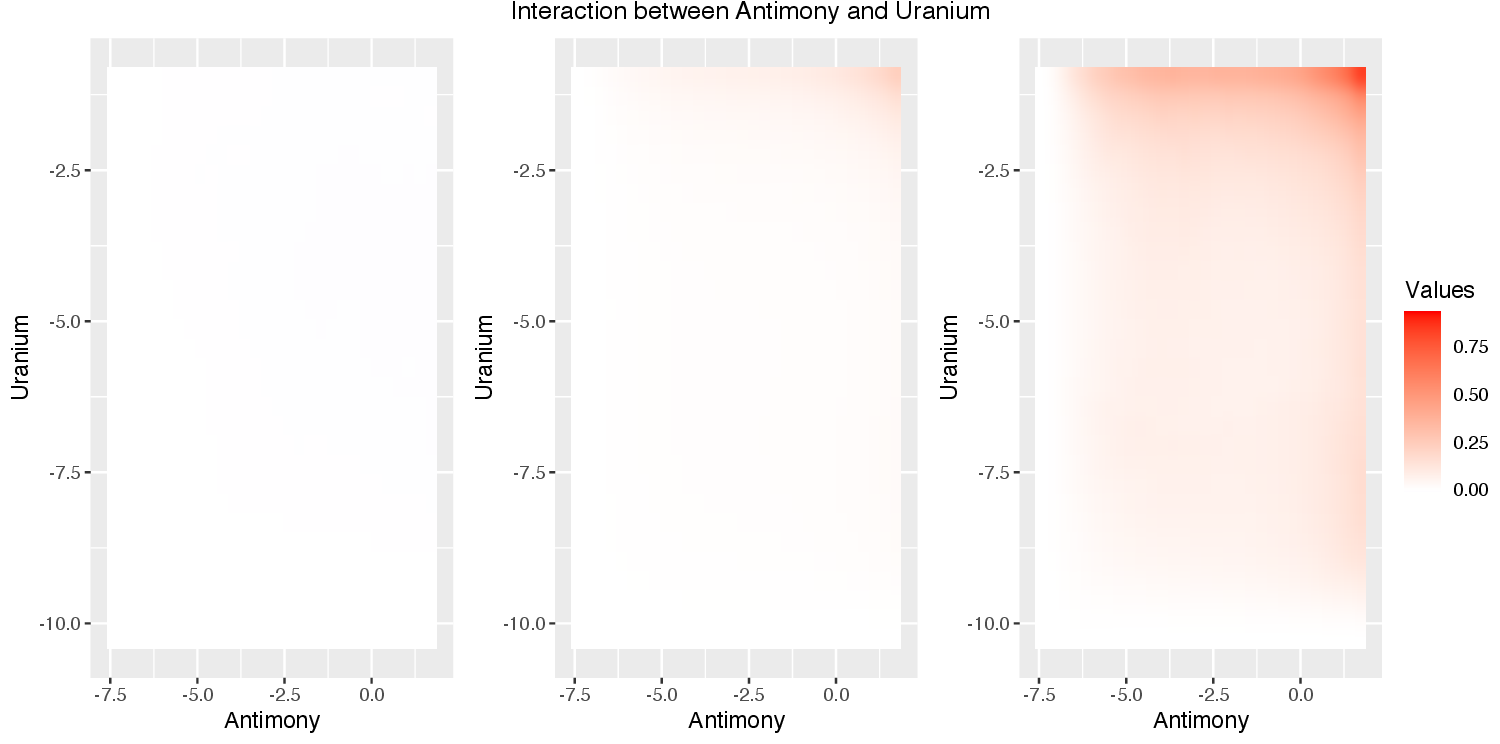}
\end{subfigure}


\begin{subfigure}{35em}
    \centering
    \includegraphics[width=35em]{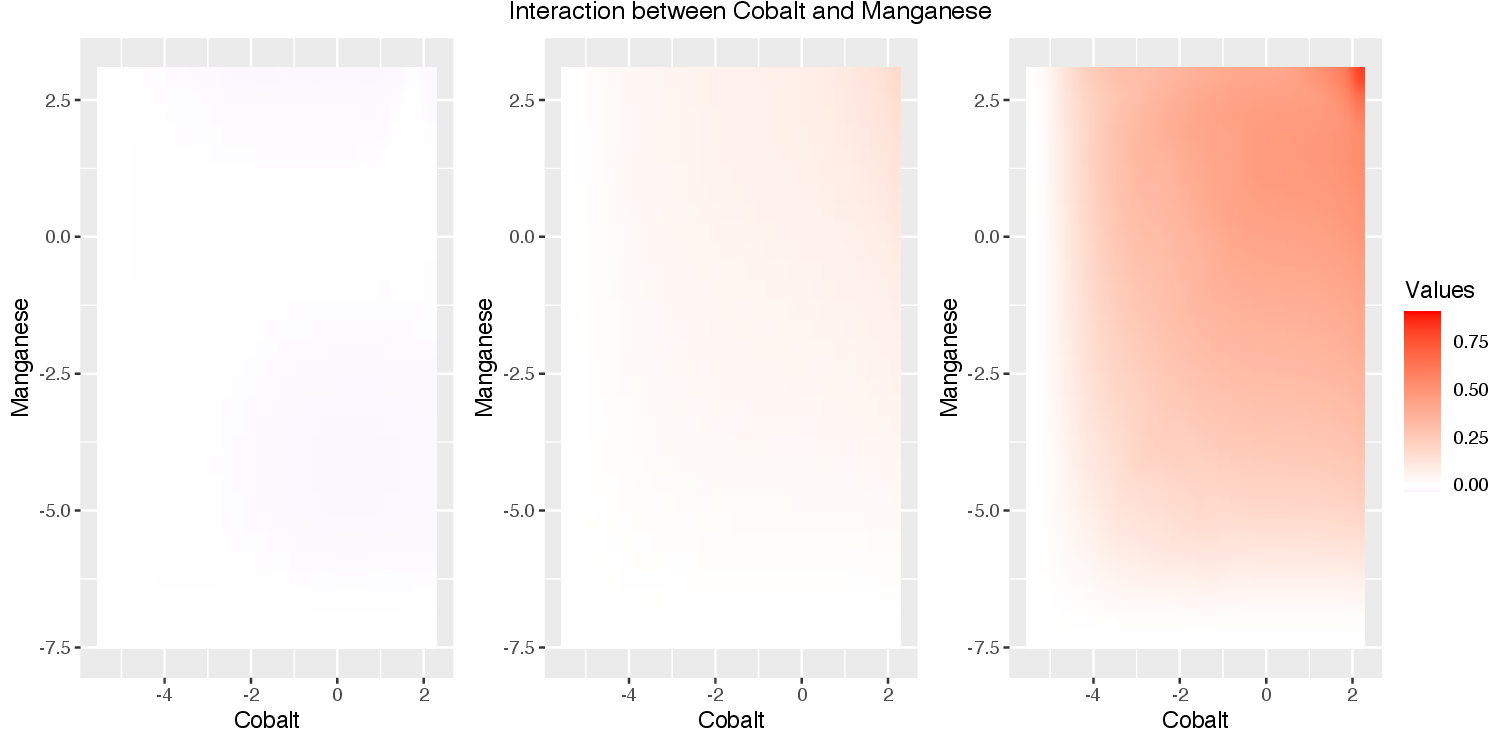}
\end{subfigure}

\caption{Plots showing the interaction effects of dilution-adjusted Cadmium and Tin, Antimony and Uranium, and Cobalt and Manganese on log dilution-adjusted  Creatinine. Exposure levels are in log scale. Each plot shows the pointwise $2.5 \%$ posterior credible surface, the posterior mean, and the $97.5 \%$ posterior credible surface from left to right.}
\label{fig:ie-plots}
   
\end{figure}

\section{Discussion}\label{sec:discussion}



In this paper, we proposed a framework for inference on main effects and pairwise synergistic, antagonistic, or null interaction effects, motivated by the {\em mixtures problem} in environmental epidemiology. 
Instead of sharply imposing constraints, we use continuous shrinkage priors to detect synergistic, antagonistic, or null interactions. When the interaction is synergistic or antagonistic, the proposed approach identifies the correct direction of interaction, while avoiding imposing exact constraints allows us to detect  interactions that are neither synergistic, antagonistic, or null. We proposed a variable selection approach to identify non-null interactions and additionally provide posterior probabilities of an interaction being synergistic and antagonistic. Unlike most quadratic regression approaches, we do not rely on heredity assumptions to detect interactions.  The proposed approach improves upon existing methods by providing greater flexibility over quadratic regression, while improving  interpretability and variable selection accuracy in detecting interactions over unrestricted nonparametric models. We demonstrated these improvements over existing state-of-the-art methods both from the perspective of estimation and variable selection.  

In our NHANES analysis, we found a variety of significant interactions among metal exposures impacting kidney function. To our knowledge, these interactions are currently unknown to the epidemiology and public health communities. It will be very interesting to validate these results in other cohorts, to study the mechanisms by which synergistic and antagonistic interactions occur in these particular metals, and also assess the regulatory implications. Our approach for assessing interactions in observational epidemiology data can be used to identify specific pairs of chemicals to study in more detail via {\em in vivo} and {\em in vitro} assays.

Although we assumed  the response to be continuous, it is straightforward to incorporate other types of responses in our framework, such as binary or count variables. In particular, we can model the response variables as falling in an exponential family with the linear predictor having exactly the form described earlier in this article. For binary outcomes and assuming a logistic link function, our proposed computational algorithm can be easily modified by relying on Polya-Gamma data augmentation \citep{polson2013polya}. A related approach can be used for counts, or our Gibbs steps can be replaced with gradient-based updates to avoid data augmentation.  It is additionally straightforward to include further applied complications, such as spatially or temporally dependent data, by incorporating appropriate terms in the linear predictor. 

The exposures in the NHANES 2015-16 data considered in this paper are mild-to-moderately correlated. In scenarios where the exposures are more heavily correlated, individually interpreting main effects and interaction effects is often infeasible. In this scenario, one could think of dividing the exposure variables into groups based on chemical class, clustering algorithms, or latent factor models. It is then of interest to investigate presence of interactions between chemical groups.

\section{Acknowledgements}
We developed the \texttt{R} package \href{https://github.com/shounakchattopadhyay/SAID}{\texttt{SAID}} 
for use in the application and numerical experiments. The package is available to download at \url{ https://github.com/shounakchattopadhyay/SAID}.
This research was supported by grants R01-ES028804, R01-ES027498, and R01-ES035625 from the National Institute of Environmental Health Sciences of the National Institutes of Health. The authors would like to thank Federico Ferrari for helpful comments.


\bibliography{ref_library}
\bibliographystyle{apalike}

\newpage

\appendix

\section*{Supplementary Material}




\section{B-spline Functions}\label{supp:bspline}

For an extensive overview of B-spline basis functions, we refer the reader to \cite{de1978practical}. As a default, we choose cubic splines for modeling both the main and interaction effects throughout the paper; however, one could vary the degree of the B-splines as required. We now describe the choice of B-splines for modeling the interaction effects and main effects in Sections 3.2 and 3.4, respectively.

For $1 \leq u < v \leq p$ and the interaction function $h_{uv}$, we let $h_{uv}$ satisfy $h_{uv}(x_u, 0) = 0$ for all $x_u \in [0,1]$ and $h_{uv}(0, x_v) = 0$ for all $x_v \in [0,1]$.
To achieve this, we first construct the set of B-spline functions $s_{u0}(x_u), \ldots, s_{um}(x_u)$ along the $u$-th dimension for $u=1,\ldots,p$, where $s_{u0}$ is the intercept spline. The intercept spline $s_{u0}$ is the only B-spline basis function satisfying $s_{u0}(0) \neq 0$. Thus, to model the interaction effects, we simply ignore the intercept spline $s_{u0}$ and only choose the B-spline functions $s_{ul}(\cdot)$ that satisfy $s_{ul}(0) = 0$, namely $s_{u1}, \ldots, s_{um}$.

In order to model the main effect $f_j$ for $j=1,\ldots,p$, we proceed as follows. Suppose we have B-spline basis functions $t_{j0}, \ldots, t_{jd}$ for the $j$th main effect, with $t_{j0}$ being the intercept spline. If the main effects satisfy the origin constraint, we simply ignore $t_{j0}$ and let the basis functions be $b_{j, l}(x_u) = t_{jl}(x_u)$ for $j=1,\ldots,p$ and $l=1,\ldots,d$. If the main effects satisfy the integral constraint, we let $b_{j, l}(x_u) = t_{jl}(x_u) - \int_{0}^{1} t_{jl}(x) \, dx$ for $j=1,\ldots,p$ and $l = 0,\ldots,d$.

\section{P-spline Priors}
\label{supp:pspline}

In order to ensure a function $g(x) = \sum_{j=1}^{m} s_j(x) \beta_j$ expressed as a linear combination of the B-splines $s_1, \ldots, s_m$ is sufficiently smooth, we follow the P-spline approach described in \cite{lang2004bayesian}. P-spline priors aim to promote smoothness by shrinking coefficients of adjacent splines towards each other. 

We use a P-spline prior of order $1$ with a slight modification to ensure propriety of the prior distribution.
The order $1$ prior described in \cite{lang2004bayesian} is obtained by considering the hierarchical model $\beta_j \mid \beta_{j-1} \sim N(\beta_{j-1}, \tau^2)$ for $j = 2, \ldots, m$, and then letting $\beta_1$ have an improper prior, given by $\pi(\beta_1) \propto 1$. However, we avoid the use of improper priors, and let $\beta_1 \sim N(0, \tau^2)$ as well. Under this modified prior specification, the joint distribution of $\mathbf{\beta} = (\beta_1, \ldots, \beta_m)^{\T}$ is given by
$$\pi(\mathbf{\beta}) \propto \exp\left[-\dfrac{1}{2 \tau^2} \left\{\beta_1^2 + \sum_{j=2}^{m} (\beta_j - \beta_{j-1})^2\right\}\right].$$
This can be rewritten as $\mathbf{\beta} \sim N(0, \tau^2 \Sigma_0)$ for a positive-definite matrix $\Sigma_0$ whose entries are known. 

\section{Model Identifiability}
\label{supp:model-id}

We state a result which ensures identifiability of the main and interaction effects in the assumed model.

\begin{proposition}
\label{prop:identifiability}
Suppose the dose response function is given by $$H(\mathbf{x}) = \alpha + \sum_{j=1}^{p} f_j(x_j) + \sum_{1 \leq u < v \leq p} h_{uv}(x_u, x_v),$$ for $\mathbf{x} = (x_1, \ldots, x_p)^{\T} \in [0,1]^p$, $p \geq 2$. Assume the following conditions hold:
\begin{enumerate}
    \item $f_1, \ldots, f_p$ together satisfy either (a) or (b), where \\
    \begin{enumerate}
        \item \textbf{Origin Constraint:} $f_j(0) = 0$ for $j = 1, \ldots, p$, or
    \item \textbf{Integral Constraint:} $\displaystyle \int_{0}^{1} f_j(x_j) \, dx_j = 0$ for $j=1,\ldots,p$. \\
    \end{enumerate}
    \item For all $1 \leq u < v \leq p$, $h_{uv}(x_u, 0) = 0$ and $h_{uv}(0, x_v) = 0$ for all $x_u, x_v \in [0,1]$.
\end{enumerate}
If two tuples $\left(\alpha, \{f_j\}_{j=1}^{p}, \{h_{uv}\}_{1 \leq u < v \leq p}\right)$ and $\left(\tilde{\alpha}, \{\tilde{f}_j\}_{j=1}^{p}, \{\tilde{h}_{uv}\}_{1 \leq u < v \leq p}\right)$ satify
$$\alpha + \sum_{j=1}^{p} f_j(x_j) + \sum_{1 \leq u < v \leq p} h_{uv}(x_u, x_v) = \tilde{\alpha} + \sum_{j=1}^{p} \tilde{f}_j(x_j) + \sum_{1 \leq u < v \leq p} \tilde{h}_{uv}(x_u, x_v),$$
for all $\mathbf{x} \in [0,1]^p$, then $\left(\alpha, \{f_j\}_{j=1}^{p}, \{h_{uv}\}_{1 \leq u < v \leq p}\right) = \left(\tilde{\alpha}, \{\tilde{f}_j\}_{j=1}^{p}, \{\tilde{h}_{uv}\}_{1 \leq u < v \leq p}\right)$.
\end{proposition}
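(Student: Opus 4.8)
The plan is to reduce the statement to a homogeneous identity and then peel off the components in increasing order of interaction. Writing $\beta = \alpha - \tilde\alpha$, $g_j = f_j - \tilde f_j$, and $k_{uv} = h_{uv} - \tilde h_{uv}$, the hypothesis becomes
$$\beta + \sum_{j=1}^{p} g_j(x_j) + \sum_{1 \leq u < v \leq p} k_{uv}(x_u, x_v) = 0 \quad \text{for all } \mathbf{x} \in [0,1]^p,$$
and I must show $\beta = 0$, $g_j \equiv 0$, and $k_{uv} \equiv 0$. Since every constraint in the hypotheses is linear and homogeneous, the differences inherit them: each $g_j$ satisfies the same origin or integral constraint as $f_j$ and $\tilde f_j$, and each $k_{uv}$ satisfies $k_{uv}(x_u, 0) = k_{uv}(0, x_v) = 0$.

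The key step is to isolate a single interaction $k_{uv}$ by a four-point inclusion–exclusion evaluation. Fix a pair $u < v$ and values $s, t \in [0,1]$, and evaluate the homogeneous identity at the four points whose $(u,v)$-coordinates lie in $\{0,s\} \times \{0,t\}$, with all other coordinates set to $0$. Writing $E(a,b)$ for the identity evaluated with $x_u = a$, $x_v = b$ and the remaining coordinates zero, I form the mixed second difference
$$E(s,t) - E(s,0) - E(0,t) + E(0,0).$$
Setting any coordinate to $0$ annihilates, through the boundary condition $k_{ab}(\cdot, 0) = k_{ab}(0, \cdot) = 0$, every interaction term involving a coordinate currently at $0$; in particular all interactions other than $k_{uv}$ vanish in every one of the four evaluations, since any pair meeting $\{u,v\}$ in at most one index always has a zero argument. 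The intercept $\beta$ and the two surviving main effects $g_u, g_v$ cancel in the alternating sum by the usual telescoping, leaving exactly $k_{uv}(s,t)$. As each $E$ is zero, this forces $k_{uv}(s,t) = 0$, and since $s, t$ and the pair were arbitrary, every $k_{uv} \equiv 0$. Crucially, this step uses only the interaction boundary conditions, so it is valid under either main-effect constraint.

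With all interactions eliminated, the identity collapses to $\beta + \sum_{j=1}^{p} g_j(x_j) = 0$, and I finish by the appropriate one of two standard arguments. Under the origin constraint, evaluating at $\mathbf{x} = \mathbf{0}$ gives $\beta = 0$ because $g_j(0) = 0$, and then evaluating with only the $j$th coordinate nonzero isolates $g_j(x_j) = 0$. Under the integral constraint, integrating over $[0,1]^p$ gives $\beta = 0$ because $\int_0^1 g_j = 0$, and integrating out all coordinates except $x_j$ isolates $g_j(x_j) = 0$. Either way all components vanish, which yields uniqueness.

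I expect the only genuine subtlety to be the bookkeeping in the key step: one must verify carefully that every interaction term other than $k_{uv}$ drops out of all four evaluations, and that the intercept and the two relevant main effects telescope to zero in the alternating sum. Once this accounting is organized correctly, the remaining steps are routine.
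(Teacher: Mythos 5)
Your proof is correct, and it is organized differently from the paper's. The paper first proves the $p=2$ case, handling the two constraints by separate arguments (under the origin constraint it recovers $\alpha$, then $f_1$, $f_2$, then $h_{12}$ by sequentially plugging in zeros; under the integral constraint it sets one coordinate to zero and integrates), and then reduces general $p$ to the $p=2$ case by zeroing out the remaining coordinates. You instead pass to the homogeneous identity in the differences $\beta$, $g_j$, $k_{uv}$ and eliminate every interaction \emph{first}, via the four-point mixed second difference $E(s,t)-E(s,0)-E(0,t)+E(0,0)$; your bookkeeping there is sound: any pair $(u',v')\neq(u,v)$ has at least one argument pinned at $0$ in all four evaluations, so those interactions vanish by the boundary condition, while $\beta$, $g_u$, $g_v$, and the constants $g_j(0)$ cancel in the alternating sum regardless of which main-effect constraint holds. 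What this buys you is uniformity: the interaction step needs no case split and no separate $p=2$ lemma, and the origin-versus-integral distinction only enters at the very end, in the routine identification of $\beta$ and the $g_j$'s from $\beta+\sum_j g_j(x_j)=0$ (evaluation at zero in one case, integration in the other). The paper's sequential peeling is arguably more elementary to read, but it pays for that with two parallel arguments and an extra reduction step; your inclusion--exclusion identity isolates $k_{uv}$ in one stroke and makes the linear-algebraic structure of the uniqueness claim explicit.
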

\begin{proof}

We first prove the result for $p=2$ and then generalize to higher dimensions, under both sets of constraints on the main effects. Suppose the following holds for all $x_1, x_2 \in [0,1]$: 
    \begin{equation}
        \label{supp-eq:identity}
        \alpha + f_1(x_1) + f_2(x_2) + h_{12}(x_1,x_2) = \tilde{\alpha} + \tilde{f_1}(x_1) + \tilde{f_2}(x_2) + \tilde{h}_{12}(x_1, x_2).
    \end{equation}
\textbf{Proof for origin constraint:}  We first let $x_1 = x_2 = 0$. Due to the constraints on the main effects and interactions, we obtain $\alpha = \tilde{\alpha}$. Next, we let $x_2 = 0$ for any $x_1$. We then obtain $\alpha + f_1(x_1) = \tilde{\alpha} + \tilde{f_1}(x_1)$ for all $x_1$, which leads us to $f_1 = \tilde{f_1}$. Similarly, by letting $x_1 = 0$ for any $x_2$, we obtain $f_2 = \tilde{f_2}$. Substituting these results into \eqref{supp-eq:identity}, we find $h_{12} = \tilde{h}_{12}.$

For $p \geq 3$, we fix $1 \leq u < v \leq p$. We first let $x_j = 0$ for all $j \notin \{u, v\}$, which implies
$$\alpha + f_u(x_u) + f_v(x_v) + h_{uv}(x_u, x_v) = \tilde{\alpha} + \tilde{f}_u(x_u) + \tilde{f}_v(x_v) + \tilde{h}_{uv}(x_u, x_v).$$
The result for $p=2$ implies $f_u = \tilde{f}_u$, $f_v = \tilde{f}_v$, and $h_{uv} = \tilde{h}_{uv}$ for any $1 \leq u < v \leq p$.

\textbf{Proof for integral constraint:} We first let $x_2 = 0$, which implies $\alpha + f_1(x_1) + f_2(0) = \tilde{\alpha} + \tilde{f}_1(x_1) + \tilde{f_2}(0)$ for all $x_1 \in [0,1]$. Integrating both sides with respect to $x_1$ and using the fact that $\int_{0}^{1} f_1(x_1) \, dx_1 = \int_{0}^{1} \tilde{f}_1(x_1) \, dx_1 = 0$, we obtain $\alpha + f_2(0) = \tilde{\alpha} + \tilde{f}_2(0)$, which when substituted into the previous equation, implies that $f_1 = \tilde{f}_1$. Similarly, letting $x_1 = 0$ leads us to $f_2 = \tilde{f}_2$, which implies $\alpha = \tilde{\alpha}$ since $\alpha + f_2(0) = \tilde{\alpha} + \tilde{f}_2(0)$. Substituting these results into \eqref{supp-eq:identity} implies $h_{12} = \tilde{h}_{12}$.

For $p \geq 3$, we fix $1 \leq u < v \leq p$. We first let $x_j = 0$ for all $j \notin \{u, v\}$, which implies
$$ \rho + f_u(x_u) + f_v(x_v) + h_{uv}(x_u, x_v) = \tilde{\rho} + \tilde{f}_u(x_u) + \tilde{f}_v(x_v) + \tilde{h}_{uv}(x_u, x_v),$$
where $\rho = \alpha + \sum_{j \notin \{u,v\}} f_j(0)$ and $\tilde{\rho} = \tilde{\alpha} + \sum_{j \notin \{u,v\}} \tilde{f}_j(0)$.  Using the result for $p=2$, this immediately implies that $f_u = \tilde{f}_u$ for all $u=1,\ldots,p$, $h_{uv} = \tilde{h}_{uv}$ for all $1\leq u < v \leq p$, and $\rho = \tilde{\rho}$, which implies $\alpha = \tilde{\alpha}$.

\end{proof}

\section{Other Approaches}
\label{supp:other}
We also considered alternative Bayesian monotone spline formulations based on restricting the sign of the coefficients \citep{ramsay1988monotone, shively2009bayesian} instead of squaring unconstrained functions. However, squaring unconstrained functions led to superior estimation performance in simulations, particularly when the non-negative function being estimated was close to $0$. To see why, suppose $g(x) = \{\sum_{j=1}^{m} s_j(x) c_j\}^2$, where $\mathbf{s}(x) = (s_j(x))_{j=1}^{m}$ denotes $m$ B-spline basis functions and $\mathbf{c} = (c_1, \ldots, c_m)^{\T}$ denotes the vector of basis coefficients. If we assume $\mathbf{c} \sim N(0, C)$ for some covariance matrix $C$, then for each $x$, $g(x) / \left[\mathbf{s}(x)^{\T} \, C \, \mathbf{s}(x)\right] \sim \chi_1^2$. Since the $\chi_1^2$ density function has an infinite spike at $0$, the squared model provides superior shrinkage of the non-negative function towards $0$ as compared to the model $g(x) = \mathbf{s}(x)^{\T} \mathbf{c}$, where we assume a truncated Gaussian prior  $\mathbf{c} \sim N(0, C) \mathbbm{1}[c_1 \geq 0, \ldots, c_m \geq 0]$. 

As an alternative to decomposing $P_{uv}$ and $N_{uv}$ as products of univariate non-negative functions, we tried modeling $P_{uv}$ and $N_{uv}$ as squares of linear combinations of tensor products of B-splines \citep{de1978practical}. That is, we let
\begin{align*}
    P_{uv}(x_u, x_v) & = \left\{\displaystyle\sum_{j=1}^{m} \sum_{j'=1}^{m} s_{uj}(x_u) s_{vj'}(x_v) \beta_{uv, jj'}\right\}^2, \text{ and }\\
    N_{uv}(x_u, x_v) & = \left\{\displaystyle\sum_{j=1}^{m} \sum_{j'=1}^{m} s_{uj}(x_u) s_{vj'}(x_v) \delta_{uv, jj'}\right\}^2.
\end{align*}
For $1 \leq u < v \leq p$, one may then analogously endow $\beta_{uv} = (\beta_{uv, jj'})_{jj'}$ and $\delta_{uv} = (\delta_{uv, jj'})_{jj'}$ with priors 
\begin{align*}
    \beta_{uv} \mid \tau_{uv,1}^2, \nu^2 & \sim N(0, \nu^2 \tau_{uv,1}^2 \tilde{\Sigma}_0),\\
    \delta_{uv} \mid \tau_{uv,2}^2, \nu^2 & \sim N(0, \nu^2 \tau_{uv,2}^2 \tilde{\Sigma}_0),\\
    \tau_{uv,1}, \tau_{uv,2}, \nu & \sim C^{+}(0,1),
\end{align*}
where $\tilde{\Sigma}_0$ is an appropriately chosen bivariate penalty matrix as outlined in \cite{lang2004bayesian}. We denote this approach by G-SAID and implement it for comparison with SAID and BKMR for the simulation case SINE in Section 4.2. The results when estimating $H({\bf x})$ using SAID, G-SAID, and BKMR are reported in Table \ref{tab:p_2_sin_whole}. Assuming $P_{uv}$ and $N_{uv}$ to be products of univariate non-negative functions is a special case of the above model, obtained by letting $\beta_{uv, jj'} = \theta_{uv, 1j} \phi_{uv, 1j'}$ and $\delta_{uv, jj'} = \theta_{uv, 2j} \phi_{uv, 2j'}.$ This is equivalent to assuming the matrices $\mathcal{B}_{uv} = (\beta_{uv, jj'})_{1 \leq j, j' \leq m}$ and $\mathcal{D}_{uv} = (\delta_{uv, jj'})_{1 \leq j, j' \leq m}$ are rank $1$. Due to the limited sample sizes and signal-to-noise ratios typical of environmental epidemiology, we found restricting the rank to $1$ to be much more effective  at identifying synergistic, antagonistic, or null interactions. Furthermore, the rank $1$ assumption reduces the number of spline coefficients for the $uv$th interaction from $2m^2$ to $4m$, leading to substantial computational gains.

We also tried including an additional penalty term in the prior distribution of the coefficients $\mathbf{\Psi}_{uv}$. 
For $1 \leq u < v \leq p$, we let $$\tilde{\mathcal{Q}}(P_{uv}, N_{uv}) = \int_{[0,1]^2} P_{uv}(x_1, x_2) dx_1 dx_2 \int_{[0,1]^2} N_{uv} (x_1, x_2) dx_1 dx_2,$$ and consider the following augmented prior on $\mathbf{\Psi}_{uv}$:
\begin{align*}
    \pi^*(\mathbf{\Psi}_{uv} \mid \tau_{uv,1}, \tau_{uv,2}, \kappa_{uv}, \nu) & \propto \pi(\mathbf{\Psi}_{uv} \mid \tau_{uv,1}, \tau_{uv,2}, \nu) \exp\left\{-\kappa_{uv} \tilde{\mathcal{Q}}(P_{uv}, N_{uv})\right\},
\end{align*}
where $\pi(\mathbf{\Psi}_{uv} \mid \tau_{uv,1}, \tau_{uv,2}, \nu)$ is as in Section 3.2 with the penalty parameter having prior $\log(\kappa_{uv}) \sim N(0,1)$ and $\tau_{uv,1}, \tau_{uv,2}, \nu \sim C^{+}(0,1)$ as before. We observe that $\tilde{\mathcal{Q}}(P_{uv}, N_{uv}) = 0$ if and only if either $P_{uv} \equiv 0$ or $N_{uv} \equiv 0$, implying $h_{uv}$ is either synergistic, antagonistic, or null (SAN). Thus, the augmented prior $\pi^*$ shrinks the interaction $h_{uv}$ towards the class of SAN interactions. However, based on numerous simulation experiments, we observed that the penalty led to little or no improvement when detecting $h_{uv}$, both from the perspective of estimation error and variable selection. We believe the already excellent performance of the half-Cauchy priors on the variance parameters in shrinking $(P_{uv}, N_{uv})_{1 \leq u < v \leq p}$ to zero to be the reason behind this. Furthermore, including the penalty leads to a doubly intractable posterior sampling problem requiring data augmentation rejection sampling \citep{rao2016data}, significantly complicating posterior computation. As a result, we did not proceed with the augmented prior approach.

\begin{table}
\centering
\begin{tabular}{|l|l|l|l|l|}
\hline
    Signal and Noise   & SAID  & BKMR & G-SAID \\ \hline
$ \gamma_0 = 1, \sigma_0^2 = 0.1$ & 0.17 & 0.12 &   0.10         \\ \hline
$\gamma_0 = 1, \sigma_0^2 = 0.5$ & 0.24 & 0.30 &   0.18        \\ \hline
$\gamma_0 = 2, \sigma_0^2 = 0.1$ & 0.32 & 0.12 &    0.12           \\ \hline
$\gamma_0 = 2, \sigma_0^2 = 0.5$ & 0.36 & 0.27 &  0.22              \\ \hline
\end{tabular}
\caption{RMSE of competitors when estimating H in case SINE. Here $n = 500$ and $p = 2$. Lower values indicate better performance.}
\label{tab:p_2_sin_whole}
\end{table}

\section{Hamiltonian Monte Carlo}
\label{supp:hmc}

We now provide further details on Hamiltonian Monte Carlo (HMC). For an extensive review, we refer the reader to \cite{neal2011mcmc, betancourt2017conceptual, betancourt2015hamiltonian}. Broadly, HMC uses gradient information to efficiently explore different parts of the distribution.

Suppose we are interested in sampling from a density $
\theta \sim G(\theta)$ supported on $\theta \in \Theta \subset \mathbb{R}^d$ with $d \geq 1$. For our purposes, $G(\cdot)$ is typically the posterior density of the parameter $\theta$. We introduce auxilliary variables $\mathbf{p} \sim N_d(0, \mathbf{M})$ independent of $\theta$, called {\em momentum} variables, and consider sampling from the joint distribution
$$G_1(\theta, \mathbf{p}) = G(\theta) N_d(\mathbf{p} \mid 0, \mathbf{M}).$$
The matrix $\mathbf{M}$ is called the {\em mass matrix}. We next consider the Hamiltonian, defined as
$$H(\theta, \mathbf{p}) = -\log G_1(\theta, \mathbf{p}) = -\log G(\theta) + \dfrac{1}{2} \mathbf{p}^{\T} \mathbf{M}^{-1} \mathbf{p} + \dfrac{d}{2} \log(2\pi) + \dfrac{1}{2} \log |\mathbf{M}|.$$
In a closed system with a particle having position $\theta$ and momentum $p$, the Hamiltonian is analogous to the total energy of the system. HMC considers simulating the trajectory of a particle with position $\theta = \theta(t)$ and momentum $\mathbf{p} = \mathbf{p}(t)$ as a function of evolving time $t$, such that the total energy or Hamiltonian $H(\theta, \mathbf{p}) = H(\theta(t), \mathbf{p}(t))$ remains constant. The trajectory of such a particle is dictated by Hamilton's equations, given by
\begin{align*}
    \dfrac{d\theta}{dt} & = \dfrac{d\theta(t)}{dt} = \dfrac{\partial H(\theta, \mathbf{p})}{\partial \mathbf{p}} = \mathbf{M}^{-1} \mathbf{p},\\
    \dfrac{d\mathbf{p}}{dt} & = \dfrac{d\mathbf{p}(t)}{dt} = -\dfrac{\partial H(\theta, \mathbf{p})}{\partial \theta} = \dfrac{d}{d\theta} \log G(\theta).
\end{align*}
One could then simulate $ (\theta, \mathbf{p}) = (\theta(t), \mathbf{p}(t))$ by solving Hamilton's equations. Unfortunately, these equations do not usually admit analytical solutions and thus require numerical approximations. One such approach is the Stormer-Verlet integrator \citep{duane1987hybrid}, commonly known as the ``leapfrog'' operator. At a particular iteration of the MCMC algorithm, suppose the position is given by $\theta^{(0)} = \theta{(0)}$. One first draws $\mathbf{p}^{(0)} = \mathbf{p}{(0)} \sim N_d(0, \mathbf{M})$. The leapfrog operator then iterates 
\begin{align*}
    \mathbf{p}{(t + 0.5)} & = \mathbf{p}{(t)} + \dfrac{e_0}{2} \dfrac{d}{d\theta} \log G(\theta) \bigg|_{\theta = \theta{(t)}},\\
    \theta{(t+1)} & = \theta{(t)} + e_0 \mathbf{M}^{-1} \mathbf{p}{(t + 0.5)},\\
    \mathbf{p}{(t + 1)} & = \mathbf{p}{(t + 0.5)} + \dfrac{e_0}{2} \dfrac{d}{d\theta} \log G(\theta) \bigg|_{\theta = \theta{(t + 1)}},
\end{align*}
for $t = 0, 1, \ldots, L_0-1$. Here, $e_0$ is called the step-size and $L_0$ the number of steps. Let the output from this algorithm be $(\theta^*, \mathbf{p}^*)$. As the leapfrog operator is a numerical approximation, the obtained $(\theta^*, \mathbf{p}^*)$ might not preserve the Hamiltonian. To ensure the resulting Markov chain is ergodic and time reversible, we use a Metropolis-Hastings \citep{metropolis1953equation, hastings1970monte} style accept-reject step that accepts $(\theta^*, \mathbf{p}^*)$ with probability
$$\alpha = \min \left[1, \exp\left\{-H(\theta^*, \mathbf{p}^*) + H\left(\theta^{(0)}, \mathbf{p}^{(0)}\right)\right\}\right].$$
This approach is now iterated to yield the desired number of MCMC samples for $(\theta, \mathbf{p})$. One then simply discards the samples of $\mathbf{p}$ to obtain MCMC samples of $\theta$.

We remark here that for carrying out HMC to sample from $G(\cdot)$, one does not require the exact form of $G(\cdot)$. Instead, a proportionality $G(\theta) \propto g(\theta)$ suffices, where $G(\theta) = g(\theta) / \int_{\Theta} g(\theta) d\theta$, as the unknown normalizing constant $\int_{\Theta} g(\theta) d\theta$ gets eliminated in Hamilton's equations. This is particularly helpful when sampling from the posterior density $\Pi(\theta \mid \mathcal{Y}) \propto \pi(\theta) p(\mathcal{Y} \mid \theta)$, as typically the normalizing constant $p(\mathcal{Y}) = \int_{\Theta} p(\mathcal{Y} \mid \theta) \pi(\theta) d\theta$ is not analytically tractable.

Performance of HMC crucially depends on carefully choosing the tuning hyperparameters $L_0, e_0,$ and $\mathbf{M}$.  In fitting SAID, we estimate $\mathbf{M}$ as the empirical covariance of the gradient $d \log G(\theta) / d\theta$ using a pilot run, and observed $L_0 \sim 10$ and $e_0 \approx 0.01$ to work well. We also perturb $e_0$ every $500-1000$ MCMC iterations to ensure the algorithm does not get stuck.  

\section{Further tables on simulation results for $p=2$}
\label{supp:p_2_table}

We provide further tables for the results described in Section 4.2. Tables \ref{tab:p_2_case2_whole} and \ref{tab:p_2_case2_int} provide RMSE values for estimating the dose response surface $H$ and the interaction $h_{12}$, respectively, under the QR setup. Tables \ref{tab:p_2_case3_whole} and \ref{tab:p_2_case3_int} provide RMSE values for estimating the dose response surface $H$ and the interaction $h_{12}$, respectively, under the MIS setup. Table \ref{tab:p_2_case3_varsel} provides the proportion of times each method misclassifies the detected interaction as null in the MIS scenario, obtained over $R = 100$ replicates. For the Bayesian approaches SAID and MixSelect, we declare the interaction to be non-null if the posterior inclusion probability (PIP) is greater than or equal to $0.5$. Details on the QR and MIS setups are provided in Section 4.2.

\begin{table}[H]
\centering
\begin{tabular}{|l|l|l|l|l|l|l|l|}
\hline
    Signal and Noise   & SAID  & BKMR & MixSelect & HierNet & FAMILY & PIE  & RAMP \\ \hline
$ \gamma_0 = 1, \sigma_0^2 = 0.1$ & 0.05 & 0.04 &  0.03         & 0.17   & 0.43   & 0.20 &   0.03   \\ \hline
$\gamma_0 = 1, \sigma_0^2 = 0.5$ & 0.11 & 0.08 &   0.07        & 0.18    & 0.43   & 0.25 &  0.08 \\ \hline
$\gamma_0 = 2, \sigma_0^2 = 0.1$ & 0.05 & 0.04 & 0.04           & 0.18    & 0.57   & 0.60 & 0.03     \\ \hline
$\gamma_0 = 2, \sigma_0^2 = 0.5$ & 0.11 & 0.08 & 0.07          & 0.19    & 0.57   & 0.59 & 0.07  \\ \hline
\end{tabular}
\caption{RMSE of competing methods in estimating $H$ for scenario QR. Here $n = 500$ and $p = 2$. Lower values indicate better performance.}
\label{tab:p_2_case2_whole}
\end{table}
\begin{table}[H]
\centering
\begin{tabular}{|l|l|l|l|l|l|l|l|}
\hline
  Signal and Noise    & SAID   & MixSelect & HierNet & FAMILY & PIE  & RAMP \\ \hline
$\gamma_0 = 1, \sigma_0^2 = 0.1$ & 0.10  &  0.10         &  0.33   & 0.31  & 0.04  &    0.04  \\ \hline
$\gamma_0 = 1, \sigma_0^2 = 0.5$ & 0.20  &   0.15        & 0.31    & 0.31   & 0.11 &  0.20    \\ \hline
$\gamma_0 = 2, \sigma_0^2 = 0.1$ & 0.13  &    0.10       & 0.41    & 0.62   & 0.04 & 0.04     \\ \hline
$\gamma_0 = 2, \sigma_0^2 = 0.5$ & 0.20  &  0.15         & 0.42    & 0.62   & 0.10 & 0.10  \\ \hline
\end{tabular}
\caption{RMSE of competing methods in estimating $h_{12}$ for scenario QR. Here $n = 500$ and $p = 2$. Lower values indicate better performance.}
\label{tab:p_2_case2_int}
\end{table}


\begin{table}[H]
\centering
\begin{tabular}{|l|l|l|l|l|l|l|l|}
\hline
    Signal and Noise   & SAID  & BKMR & MixSelect & HierNet & PIE  & RAMP \\ \hline
$ \gamma_0 = 1, \sigma_0^2 = 0.1$ & 0.05 & 0.05 &  0.10         & 0.19   &  1.43 &  0.08   \\ \hline
$\gamma_0 = 1, \sigma_0^2 = 0.5$ & 0.11 & 0.11 &    0.12       & 0.19       & 1.18 &  0.12 \\ \hline
$\gamma_0 = 2, \sigma_0^2 = 0.1$ & 0.05 & 0.05 & 0.17           & 0.26       & 2.48 & 0.13     \\ \hline
$\gamma_0 = 2, \sigma_0^2 = 0.5$ & 0.12 & 0.12 & 0.19          & 0.26       & 2.32 & 0.16  \\ \hline
\end{tabular}
\caption{RMSE of competing methods in estimating $H$ for scenario MIS. Here $n = 500$ and $p = 2$. Lower values indicate better performance.}
\label{tab:p_2_case3_whole}
\end{table}
\begin{table}[H]
\centering
\begin{tabular}{|l|l|l|l|l|l|l|l|}
\hline
  Signal and Noise    & SAID   & MixSelect & HierNet  & PIE  & RAMP \\ \hline
$\gamma_0 = 1, \sigma_0^2 = 0.1$ & 0.11  &   0.38        &  0.15   & 0.33    &    0.33  \\ \hline
$\gamma_0 = 1, \sigma_0^2 = 0.5$ & 0.13  &   0.29        & 0.15    & 0.32    &  0.30    \\ \hline
$\gamma_0 = 2, \sigma_0^2 = 0.1$ & 0.20  &  0.67         & 0.33    & 0.65    & 0.65     \\ \hline
$\gamma_0 = 2, \sigma_0^2 = 0.5$ & 0.21  &   0.55        & 0.33    & 0.65    & 0.65  \\ \hline
\end{tabular}
\caption{RMSE of competing methods in estimating $h_{12}$ for scenario MIS. Here $n = 500$ and $p = 2$. Lower values indicate better performance.}
\label{tab:p_2_case3_int}
\end{table}

\begin{table}[H]
\centering
\begin{tabular}{|l|l|l|l|l|l|l|l|}
\hline
  Signal and Noise  (SNR)  & SAID   & MixSelect & HierNet  & PIE  & RAMP \\ \hline
$\gamma_0 = 1, \sigma_0^2 = 0.1$ $(0.15)$ & 0.02  &   0.00        &  0.10   & 0.00    &    0.00  \\ \hline
$\gamma_0 = 1, \sigma_0^2 = 0.5$ $(0.03)$ & 0.27  &   0.16        & 0.69    & 0.20    &  0.37    \\ \hline
$\gamma_0 = 2, \sigma_0^2 = 0.1$ $(0.62)$ & 0.00  &  0.03         & 0.00    & 0.00    & 0.00     \\ \hline
$\gamma_0 = 2, \sigma_0^2 = 0.5$ $(0.12)$ & 0.00  &   0.01        & 0.02    & 0.00    & 0.00  \\ \hline
\end{tabular}
\caption{Proportion of misclassifying $h_{12}$ as null for each competing method under scenario MIS. Here $n = 500$ and $p = 2$. Signal-to-noise ratio (SNR) for detecting $h_{12}$ is also indicated. Lower values indicate better performance.}
\label{tab:p_2_case3_varsel}
\end{table}

\section{Creatinine Data Application}
\label{supp:application}

We provide plots for rest of the main effects and interaction effects not presented in Section 5.3. In Figure \ref{suppfig:me-plots}, we plot the estimated main effects of the metals Antimony, Barium, Cobalt, Lead, Manganese, Strontium, Thallium, Tin, and Tungsten. In Figure \ref{suppfig:ie_plots}, we plot the estimated interaction surface between Cadmium and Manganese. 

\begin{figure}
\centering
\begin{tabular}{ccc}
    \includegraphics[width=10em]{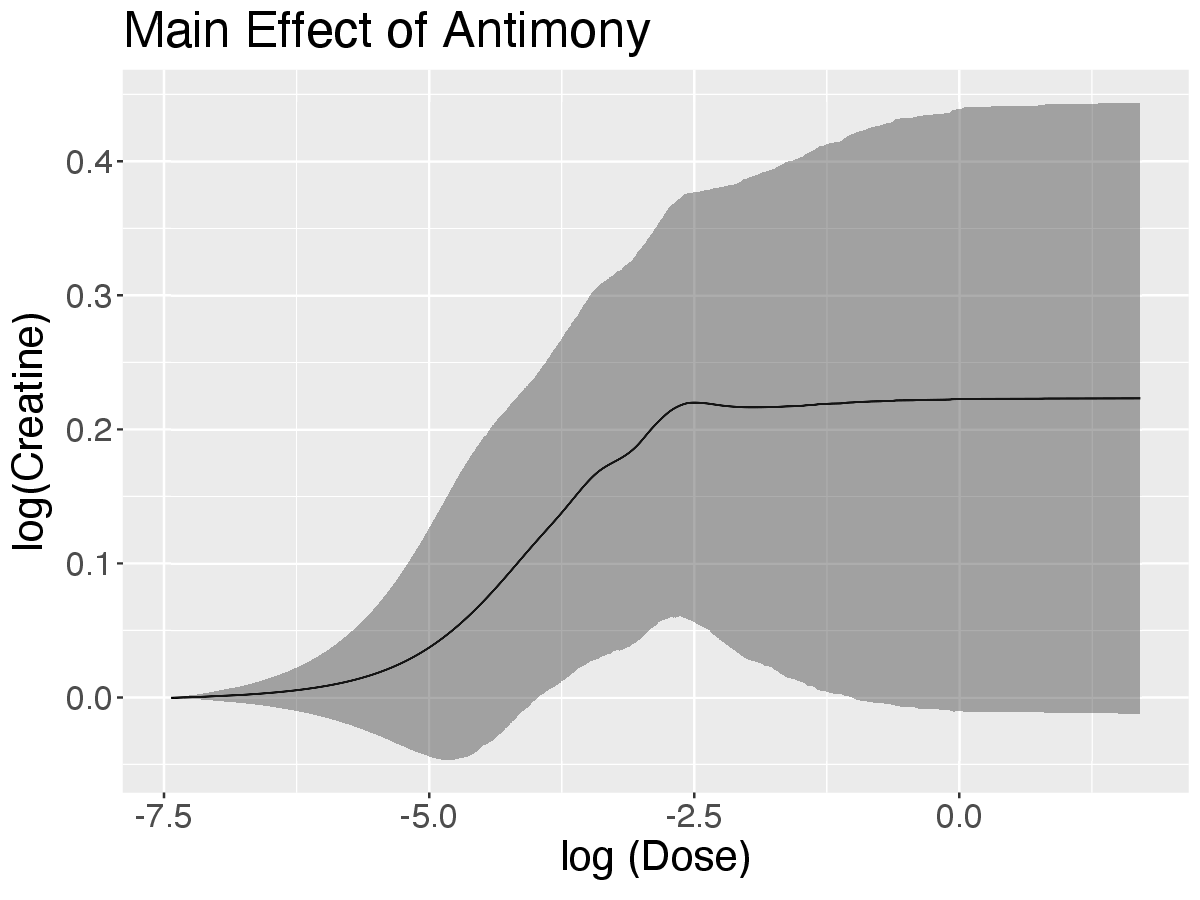} &
     \includegraphics[width=10em]{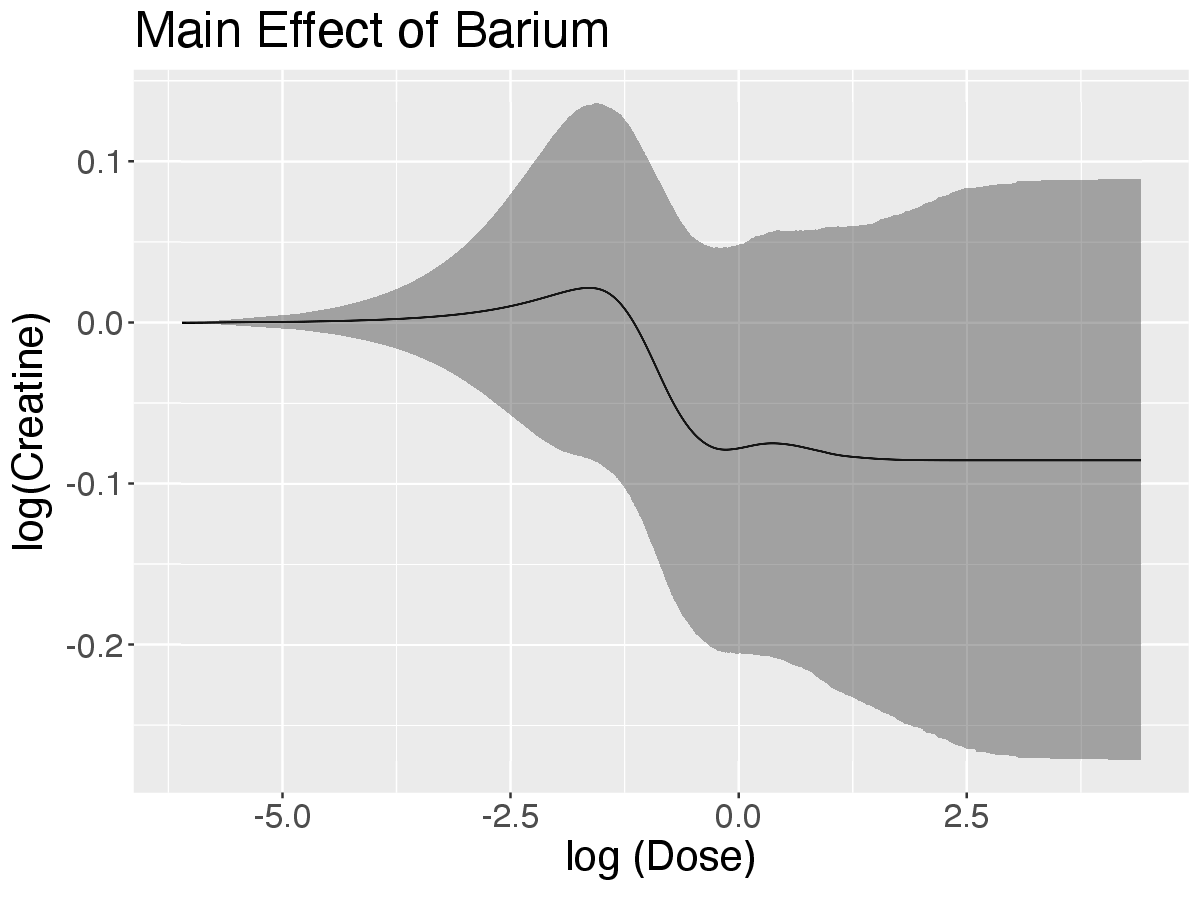} \\
      \includegraphics[width=10em]{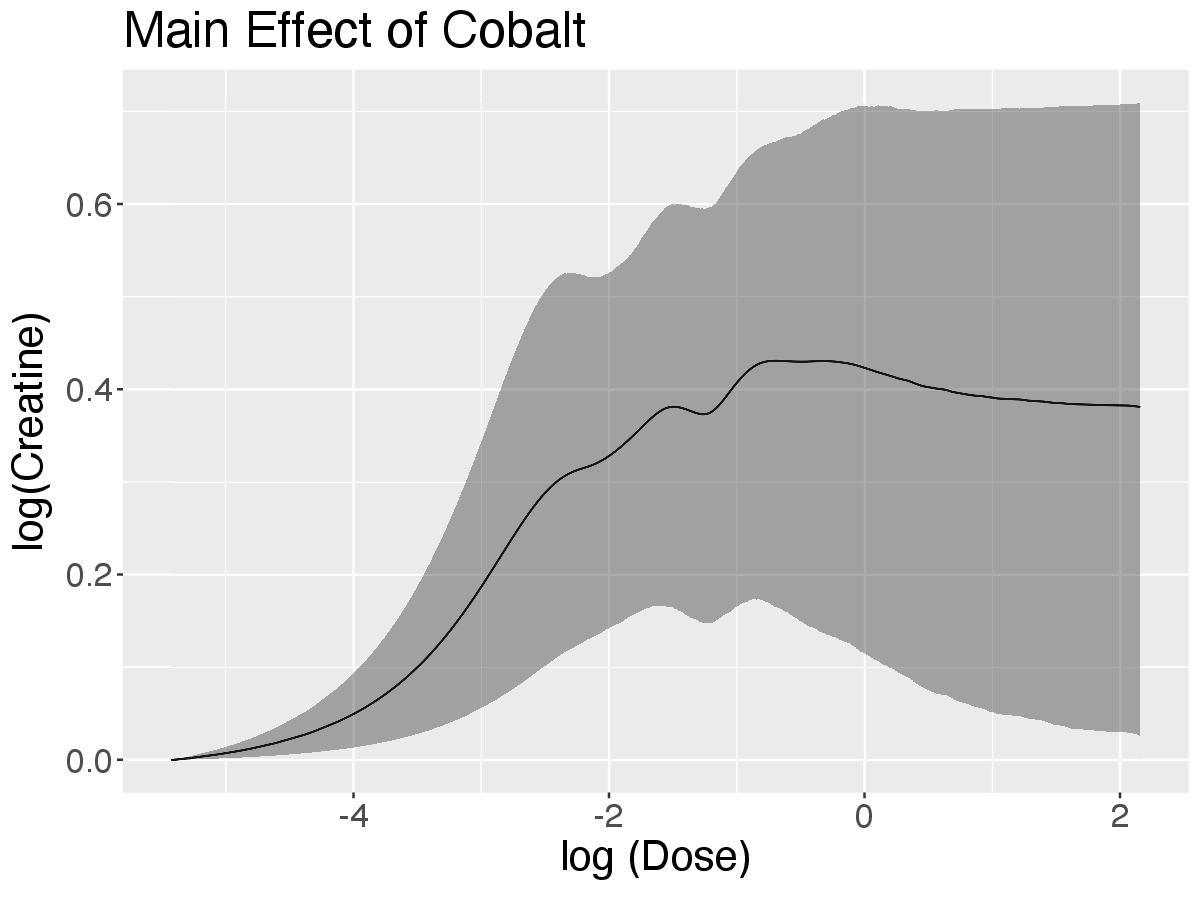}  &  \includegraphics[width=10em]{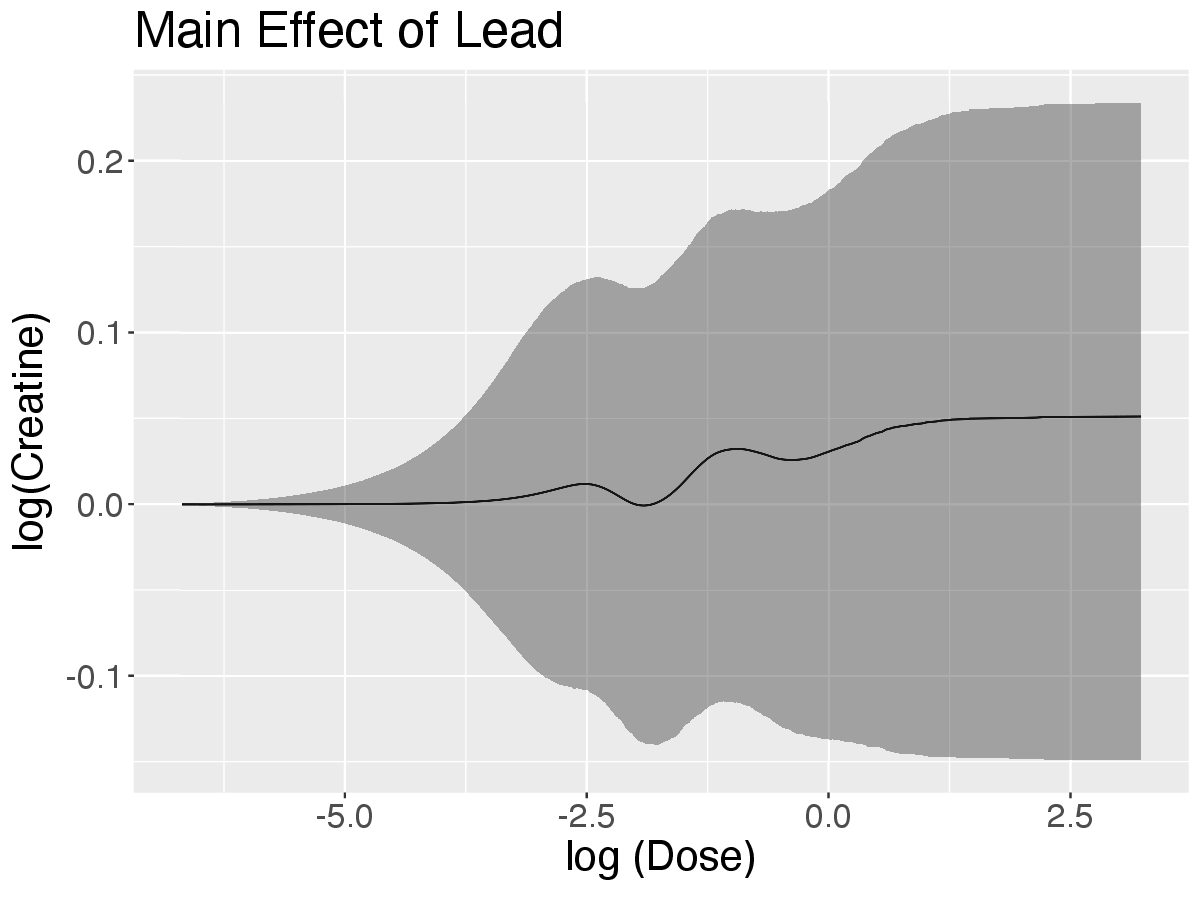} \\
       \includegraphics[width=10em]{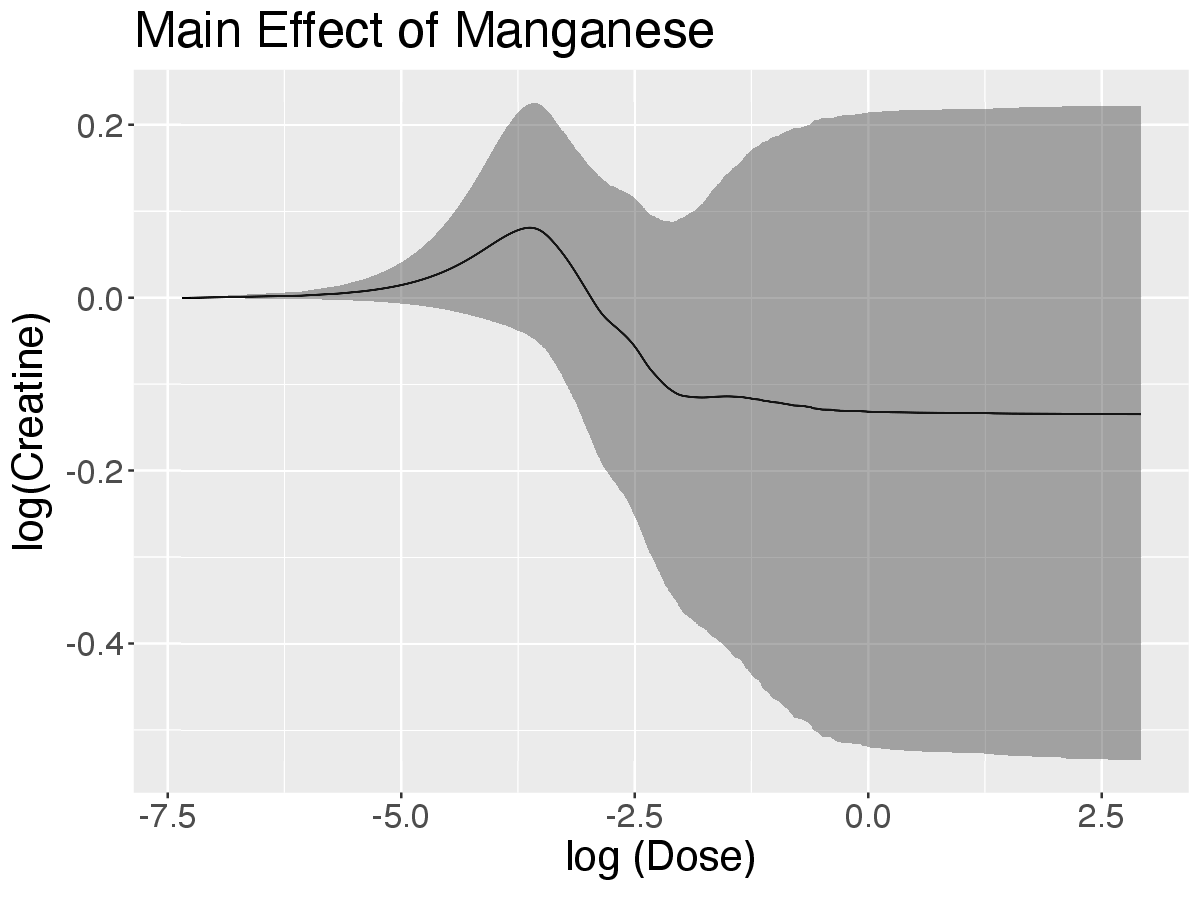} &   \includegraphics[width=10em]{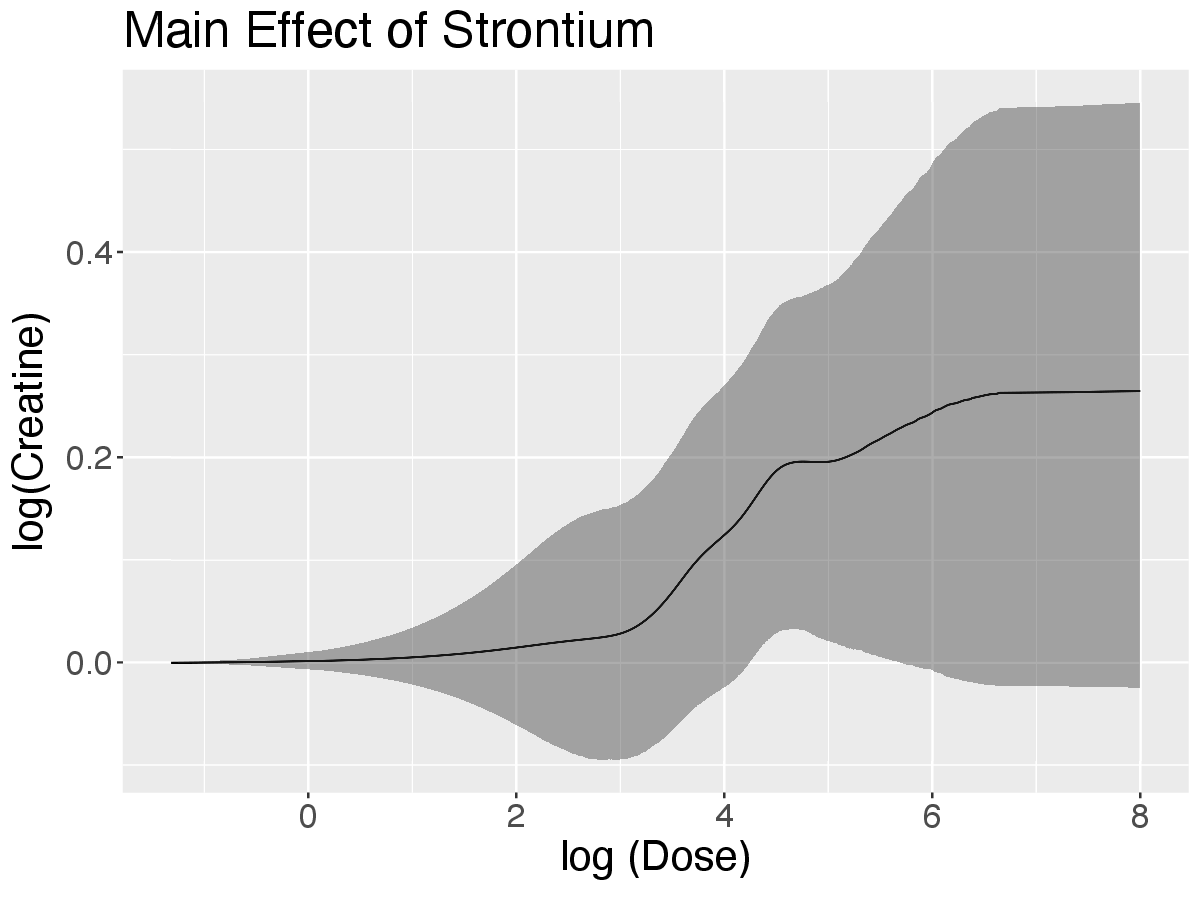} \\
        \includegraphics[width=10em]{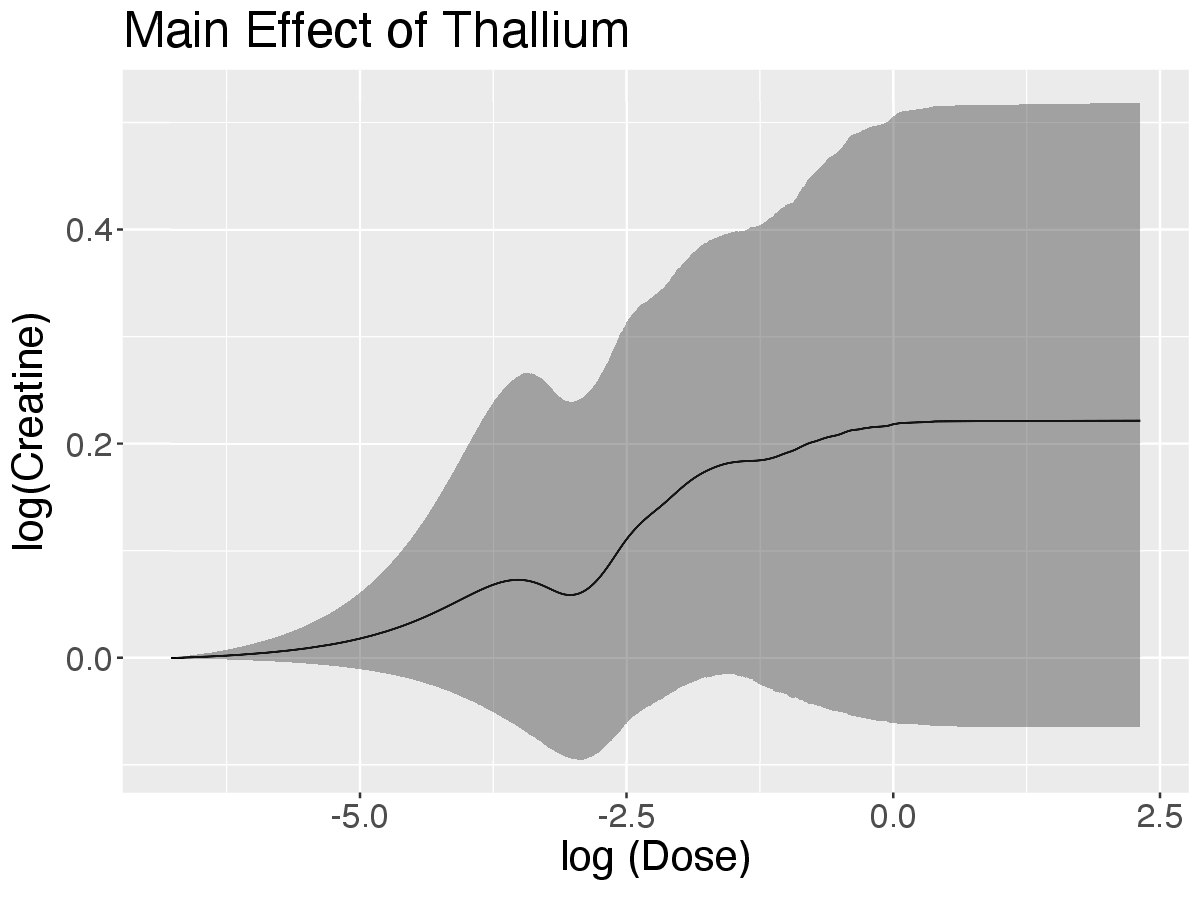} &   \includegraphics[width=10em]{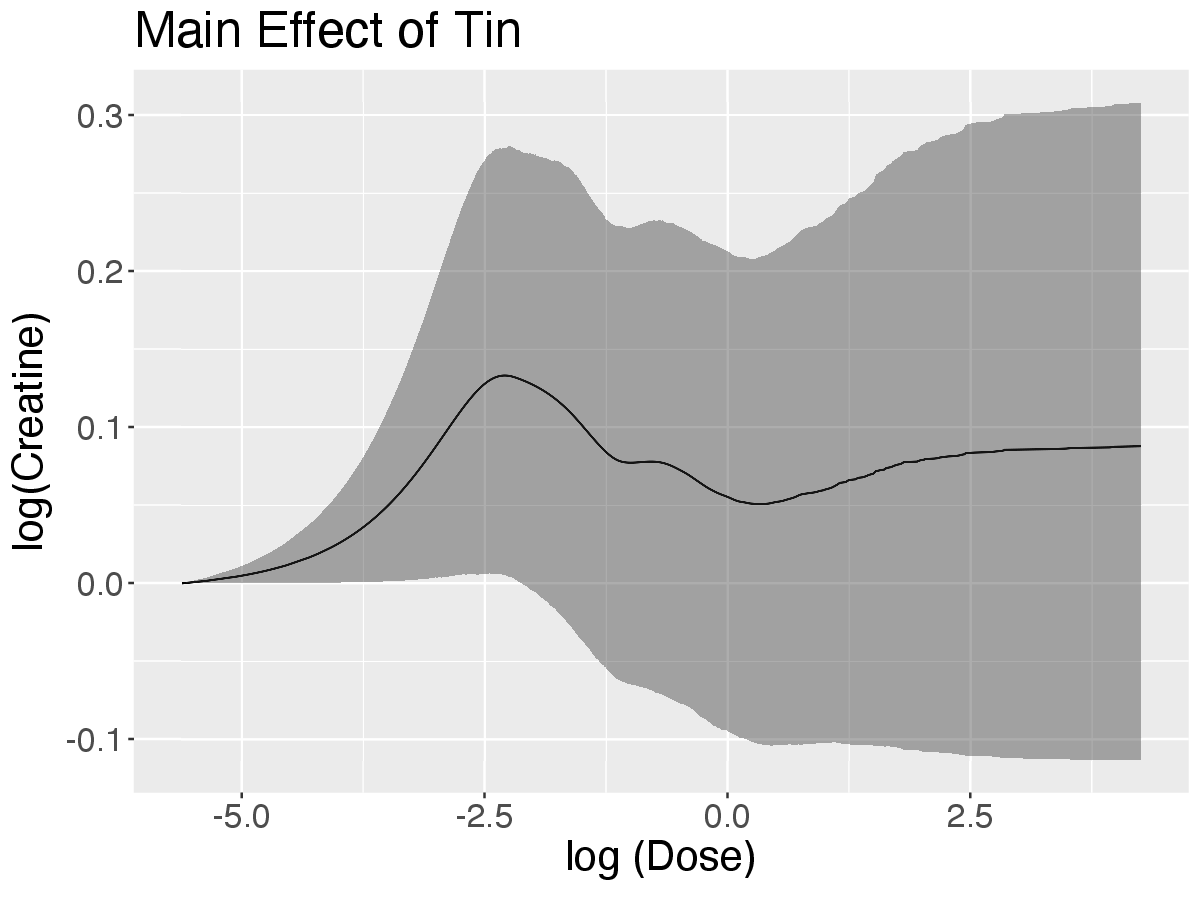} \\
         \includegraphics[width=10em]{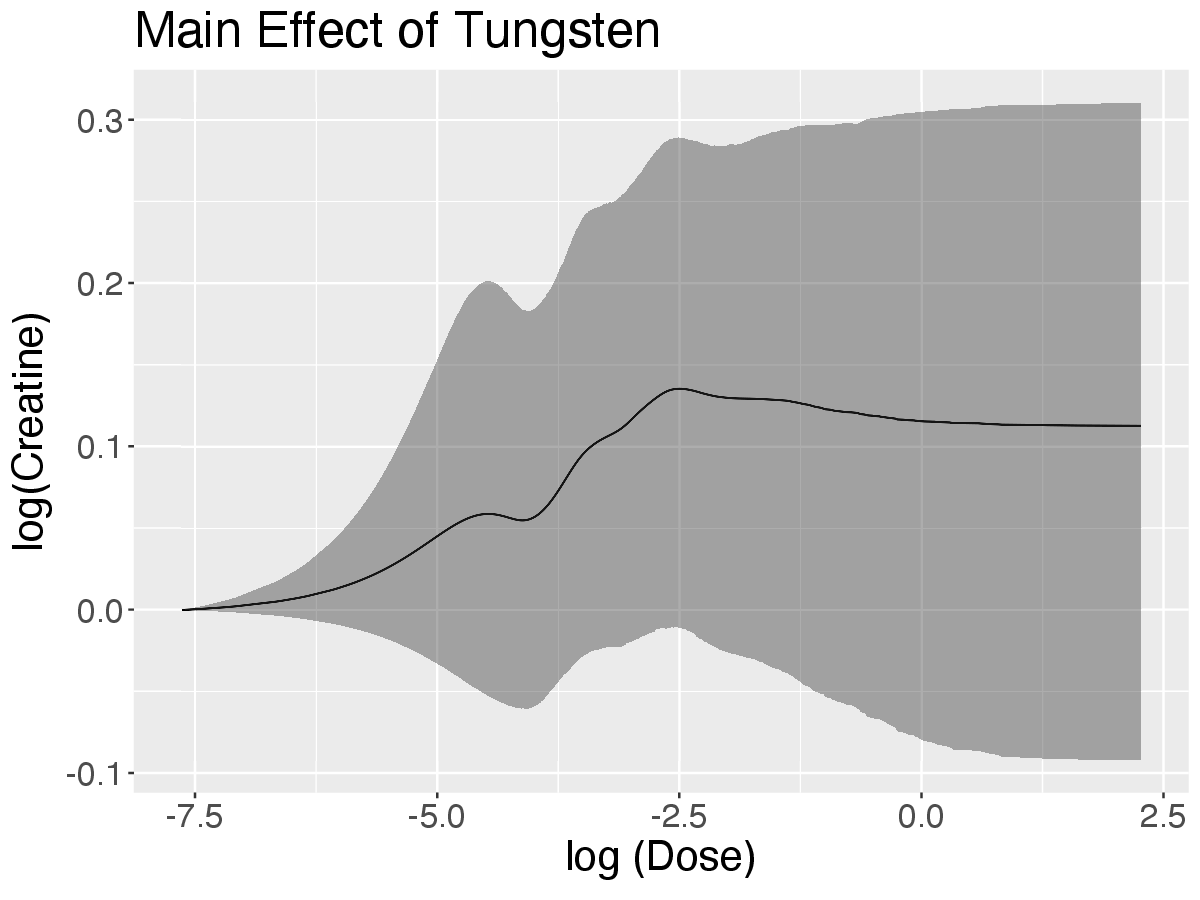} 
\end{tabular}
\caption{Plots showing the main effects of dilution-adjusted Antimony, Barium, Cobalt, Lead, Manganese, Strontium, Thallium, Tin, and Tungsten on log dilution-adjusted Creatinine. Exposure levels are in log scale. Black line denotes posterior mean and shaded regions denote pointwise $95 \%$ posterior credible intervals. }
\label{suppfig:me-plots}
\end{figure}




\begin{figure}
    \centering
    \includegraphics[width=35em]{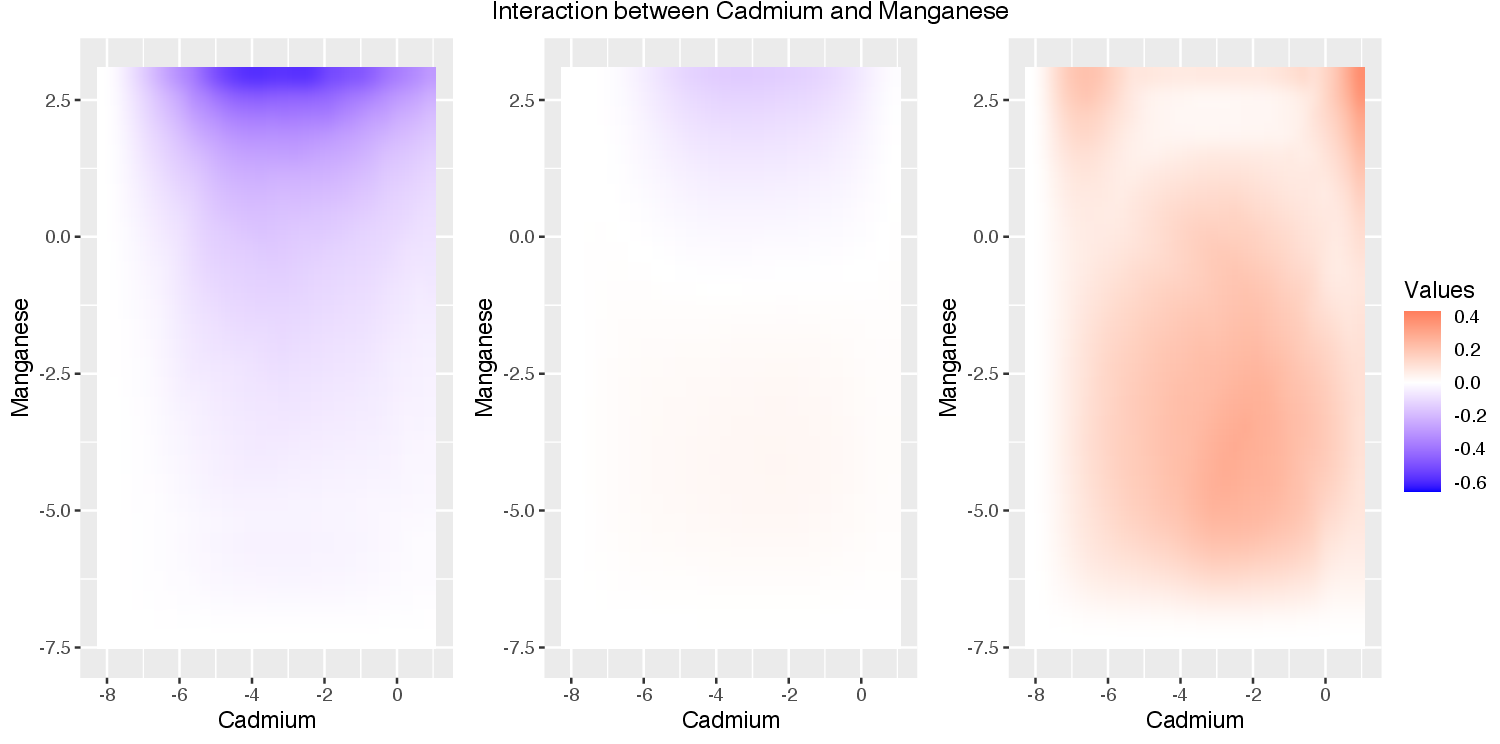}
    \caption{Plot showing the interaction effect of dilution-adjusted Cadmium and Manganese on log dilution-adjusted  Creatinine. Exposure levels are in log scale. Plot shows the pointwise $2.5 \%$ posterior credible surface, the posterior mean, and the $97.5 \%$ posterior credible surface from left to right.}
    \label{suppfig:ie_plots}
\end{figure}





\end{document}